\newcommand{\whp}{\textit{whp}~}
\newcommand{\hide}[1]{}
\newcommand{\field}[1]{\mathbb{#1}} 
\newcommand{\beql}[1]{\begin{equation}\label{#1}}
\newcommand{\eeq}{\end{equation}}
\newcommand{\comment}[1]{}
\newcommand{\Mean}[1]{{\mathbb E}\left[{#1}\right]}
\newcommand{\Ceil}[1]{{\left\lceil{#1}\right\rceil}}
\newcommand{\Prob}[1]{{{\bf{Pr}}\left[{#1}\right]}}
\newtheorem{theorem}{Theorem}
\newtheorem{lemma}{Lemma}
\newtheorem{claim}{Claim}
\begin{document}

\title{High Degree Vertices, Eigenvalues and Diameter of Random Apollonian Networks}

\author{Alan Frieze}
\address{Department of Mathematical Sciences\\
Carnegie Mellon University\\
5000 Forbes Av., 15213\\
Pittsburgh, PA \\
U.S.A} \email{alan@random.math.cmu.edu}
\author{Charalampos E. Tsourakakis}
\address{Department of Mathematical Sciences\\
Carnegie Mellon University\\
5000 Forbes Av., 15213\\
Pittsburgh, PA \\
U.S.A} \email{ctsourak@math.cmu.edu}
\keywords{Complex networks; Random Apollonian Networks/Stacked Polytopes/Planar 3-trees; Degrees; Eigenvalues; Diameter}
\subjclass{68R10, 68R05, 68P05}

\maketitle

\begin{abstract}

Upon the discovery of power laws \cite{albert,broder,faloutsos},
a large body of work in complex network analysis has focused on developing 
generative models of graphs which mimick real-world network properties
such as skewed degree distributions \cite{faloutsos}, small diameter \cite{albert2} and large clustering
coefficients \cite{tsourakakiswaw,strogatz}.  Most of these models belong either to the stochastic, 
e.g., \cite{albert,bonato,cooperfrieze,kronecker},
or the strategic e.g., \cite{aumann,aumann2,boorman,fabrikant}, family of network formation models. 

Despite the fact that planar graphs arise in numerous real-world settings, e.g., in road and railway maps, 
in printed circuits, in chemical molecules, in river networks  \cite{urban,london}, 
comparably less attention has been devoted to the study of planar graph generators. 
In this work we analyze basic properties of Random Apollonian Networks \cite{zhang,zhou}, 
a popular stochastic model which generates planar graphs with power law properties. 

Specifically, let $k$ be a constant and $\Delta_1 \geq \Delta_2 \geq .. \geq \Delta_k$ be the
degrees of the $k$ highest degree vertices. We prove that at time $t$, for any function 
$f$ with $f(t) \rightarrow +\infty$  as $t \rightarrow +\infty$, $\frac{t^{1/2}}{f(t)} \leq \Delta_1 \leq f(t)t^{1/2}$
and for $i=2,\ldots,k=O(1)$,  $\frac{t^{1/2}}{f(t)} \leq \Delta_i \leq \Delta_{i-1} - \frac{t^{1/2}}{f(t)}$
with high probability (\whp). Then, we show that the $k$ largest eigenvalues of the adjacency
matrix of this graph satisfy $\lambda_k = (1\pm o(1))\Delta_k^{1/2}$ \whp.
Furthermore, we prove a refined upper bound on the asymptotic growth of the diameter, i.e., 
that \whp the diameter $d(G_t)$ at time $t$ satisfies $d(G_t) \leq \rho \log{t}$
where $\frac{1}{\rho}=\eta$ is the unique solution greater than 1 of the equation $\eta - 1 - \log{\eta} = \log{3}$.
Finally, we investigate other properties of the model.

\end{abstract}

\section{Introduction} 
\label{sec:intro} 

In recent years, a considerable amount of research has focused on the study of graph structures arising from technological,
biological and sociological systems. Graphs are the tool of choice in modeling such systems since the latter are typically described as a set of pairwise interactions. Important examples of such datasets are the Internet graph (vertices are routers, edges correspond to physical links), the Web graph (vertices are web pages, edges correspond to hyperlinks), social networks (vertices are humans, edges correspond to friendships), information networks like Facebook and LinkedIn (vertices are accounts, edges correspond to online friendships), biological networks (vertices are proteins, edges correspond to protein interactions), math collaboration network (vertices are mathematicians, edges correspond to collaborations) and many more.

Towards the end of the '90s, a series of papers observed that the classic models of random graphs introduced by Erd\"{o}s and R\'{e}nyi \cite{erdos,erdos2}
and Gilbert \cite{gilbert} did not explain the empirical properties of real-world networks \cite{albert,broder,faloutsos}. 
Typical properties of such networks include skewed degree distributions \cite{faloutsos}, 
large clustering coefficients \cite{tsourakakiswaw,strogatz} and small average distances \cite{albert2,strogatz}, a phenomenon 
typically referred to as ``small worlds''.  Skewed degree distributions have widely been modeled as power laws:
the proportion of vertices of a given degree follows an approximate inverse power law, i.e., the proportion of vertices of
degree k is approximately $Ck^{-\alpha}$.

Understanding the properties of real world networks  has attracted considerable research interest in the recent years \cite{chunglu}.
A large body of work has focused on finding models which generate graphs which mimick real world networks. 
Most of the existing work on network formation models falls into the stochastic and the strategic categories.
Kronecker graphs \cite{kronecker}, the Cooper-Frieze model for the Web graph \cite{cooperfrieze}, 
the Aiello-Chung-Lu model \cite{aiello}, Protean graphs \cite{protean} and numerous other models 
belong to the former category. 
The strategic approach has its origins in the work of Boorman \cite{boorman}, Aumann \cite{aumann}
and Aumann and Myerson \cite{aumann2} and a variety of models fall into this category, 
e.g., the Fabrikant-Koutsoupias-Papadimitriou model \cite{fabrikant}.
Despite the large amount of work on models which generate power law graphs, 
a considerably smaller amount of work has focused on generative models for planar graphs.
Planar graphs have been studied mainly in the context of transportation networks
and occur in numerous real-world problems: modelling
city streets \cite{eisenstat,london}, crowd simulations \cite{pedestrians},
river networks, railway and road maps printed circuits, chemical molecules,
see also  \cite{urban} and references therein. 

\noindent In this work, we prove fundamental properties of Random Apollonian Networks (RANs) \cite{zhou},
a popular model of planar graphs with power law properties \cite{survey}. We use the symbol 
$G_t$ to denote the random graph at time $t$. The details of the model are exposed in 
Section~\ref{sec:ran}. Specifically our main results are the following theorems:

\begin{theorem}[Highest $k$ Degrees]
Let $\Delta_1 \geq \Delta_2 \geq \ldots \geq \Delta_k$ be the $k$ highest degrees of the Random Apollonian Network $G_t$ 
at time $t$ where $k$ is a fixed positive integer. Also, let $f(t)$ be a function such that $f(t) \rightarrow +\infty$ as $t \rightarrow +\infty$. Then \whp

$$ \frac{t^{1/2}}{f(t)} \leq \Delta_1 \leq t^{1/2} f(t) $$ 

\noindent and for $i=2,\ldots,k$

$$ \frac{t^{1/2}}{f(t)} \leq \Delta_i \leq \Delta_{i-1} - \frac{t^{1/2}}{f(t)} $$ 
\label{thrm:thrm1} 
\end{theorem}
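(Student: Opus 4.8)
\emph{Proof strategy.} The plan is to treat the degree of each vertex as a Pólya-type urn. In $G_t$ a vertex $v$ of degree $d$ lies on a number of faces that is an affine function of $d$ (see Section~\ref{sec:ran}), so the probability that the vertex inserted at step $t+1$ attaches to $v$ equals $\frac{d_t(v)+O(1)}{2t+1}$. Writing $d_t(v_i)$ for the degree at time $t$ of the vertex born at step $i$ and using the rising factorials $g_r(x)=x(x+1)\cdots(x+r-1)$, a one-step computation gives the exact identity $\Mean{g_r(d_{t+1}(v_i)) \mid \mathcal F_t} = g_r(d_t(v_i))(1+\frac{r}{2t+1})$, and hence $\Mean{g_r(d_t(v_i))} = g_r(3)\prod_{s=i}^{t-1}(1+\frac{r}{2s+1}) \asymp (t/i)^{r/2}$. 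This closed form is the engine for both directions.

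For the upper bound on $\Delta_1$ I would apply Markov's inequality to $g_r$: for the vertex born at step $i$, $\Prob{d_t(v_i)\ge t^{1/2}f(t)} \le \frac{\Mean{g_r(d_t(v_i))}}{g_r(t^{1/2}f(t))}$, and optimizing the free parameter $r$ near $r \asymp \sqrt{i}\,f(t)$ produces a stretched-exponential tail of order $\exp(-c\sqrt{i}\,f(t))$. Summing over all $t+O(1)$ vertices, $\sum_i \exp(-c\sqrt{i}\,f(t)) = o(1)$, so \whp no vertex exceeds $t^{1/2}f(t)$; in particular $\Delta_1 \le t^{1/2}f(t)$, and the same bound passes to every $\Delta_i$.

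For the lower bounds it suffices to exhibit $k$ vertices of degree at least $t^{1/2}/f(t)$, and I would use the $k$ oldest vertices $v_1,\dots,v_k$. From the $r=1,2$ moments, $\Mean{d_t(v_j)} \asymp t^{1/2}$ while $\Var{d_t(v_j)} = O(t)$, and the normalized martingale $d_t(v_j)/\prod_{s=i}^{t-1}(1+\frac1{2s+1})$ converges almost surely to a limit $W_j$. The key fact, established as in the separation argument below, is that $W_j$ has no atom at $0$, whence $\Prob{d_t(v_j) \le t^{1/2}/f(t)} \to 0$; a union bound over the fixed number $k$ of these vertices then gives \whp $\Delta_k \ge t^{1/2}/f(t)$.

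The delicate part is the separation $\Delta_i \le \Delta_{i-1} - t^{1/2}/f(t)$, which I would handle in two stages. First, a uniform version of the tail bound shows that the $k$ largest degrees are attained by genuinely old vertices: for fixed $c>0$, $\Mean{\#\{i> A : d_t(v_i)\ge c\,t^{1/2}\}} \to 0$ as $A\to\infty$, so after fixing $A$ large the top $k$ degrees lie among $v_1,\dots,v_A$ with probability at least $1-\epsilon(A)$. Second, for these finitely many early vertices I would prove anti-concentration of pairwise differences: for fixed $i\ne j$ the pair $(d_t(v_i)/t^{1/2},\,d_t(v_j)/t^{1/2})$ converges jointly to $(W_i,W_j)$ whose law is \emph{atomless} on the diagonal, so that $\Prob{|d_t(v_i)-d_t(v_j)|\le t^{1/2}/f(t)} \to \Prob{W_i=W_j}=0$. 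A union bound over the $\binom{A}{2}$ pairs rules out two top degrees lying within $t^{1/2}/f(t)$, and letting $A\to\infty$ after $t\to\infty$ completes the separation. I expect the main obstacle to be precisely this anti-concentration step, namely showing that the rescaled-degree differences carry no atom at $0$ (equivalently, that the limiting urn variables $W_i$ have a continuous joint law) at the quantitative scale $o(t^{1/2})$, the moment estimates and union bounds in the other parts being comparatively routine.
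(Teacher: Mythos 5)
Your moment engine is exactly the paper's Lemma~\ref{lem:lemma1}: the one-step identity $\Mean{d_{t}(s)^{(r)} \mid d_{t-1}(s)} = d_{t-1}(s)^{(r)}\bigl(1+\tfrac{r}{2t-1}\bigr)$ is the same computation (with the caveat that for $s\in\{1,2,3\}$ the face count is $d_t(s)-1$, so the identity degrades to an inequality, as the paper's Case~2 shows). Your upper bound via Markov with an optimized order $r\asymp \sqrt{i}\,f(t)$ is correct and in fact gives stretched-exponential tails, sharper than the paper's fixed third-moment argument (Lemmas~\ref{lem:lemma3},~\ref{lem:lemma4}); summing $\exp(-c\sqrt{i}\,f(t))$ over $i$ does give $o(1)$, and your stage-one reduction of the top $k$ degrees to the $O(1)$ oldest vertices is sound.

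The genuine gap is in the lower bound and the separation, i.e.\ precisely the two claims you flag but do not prove: that the martingale limits $W_j$ of $d_t(v_j)\big/\prod_{s}(1+\tfrac{1}{2s+1})$ are a.s.\ positive, and that $W_i-W_j$ has no atom at $0$. As written the proposal is circular: the no-atom-at-zero claim in your lower bound is ``established as in the separation argument below,'' while the separation argument itself defers the diagonal anti-concentration as ``the main obstacle.'' These facts are the entire content of the theorem at scale $t^{1/2}/f(t)$ (the two-parameter limit $t\to\infty$ then $\epsilon\to 0$ is fine, so the logic would close \emph{if} atomlessness were proved, but no proof is offered). Note also that for adjacent early vertices the increments are strongly correlated --- an insertion into a face shared by $s_1,s_2$ raises both degrees --- so joint atomlessness of $(W_i,W_j)$ is not a soft consequence of one-dimensional urn theory. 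The paper avoids limit laws entirely and closes both holes quantitatively: for the lower bound it couples the supernode of the first $t_0$ vertices with a modified process and bounds the point probabilities $\Prob{Y_t=d_0+r}$ explicitly (Claim~\ref{claim:lemma2}, Lemma~\ref{lem:lemma2}), then extracts $k$ vertices of degree at least $t_0^{-1}t^{1/2}$ by pigeonhole against the max-degree bound (Lemma~\ref{lem:lemma5}); for separation it bounds $p_{l,s_1,s_2}=\Prob{d_t(s_1)-d_t(s_2)=l}$ uniformly by $o\bigl(f(t)/(t_1^2\sqrt{t})\bigr)$ and sums over the $O(\sqrt{t}/f(t))$ values of $l$ (Lemma~\ref{lem:lemma6}), handling adjacent pairs with a modified process $\mathcal{Y}'$ that redirects insertions away from shared faces so the degree difference is preserved. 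To complete your route you would need an analogous pointwise estimate on $\Prob{d_t(s_1)-d_t(s_2)=l}$ (or a genuine proof of continuity of the joint limit law, which essentially requires the same estimates), and that machinery is exactly what your sketch omits.
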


\noindent The growing function $f(t)$ cannot be removed, see \cite{flaxman}.
Using Theorem~\ref{thrm:thrm1} we relate the highest $k$ degrees and eigevalues. 

\begin{theorem}[Largest $k$ Eigenvalues] 
Let $\lambda_1 \geq \lambda_2 \geq \ldots \geq \lambda_k$ be the largest $k$ eigenvalues
of the adjacency matrix of $G_t$. Then \whp $ \lambda_i =(1\pm o(1))\sqrt{\Delta_i}.$
\label{thrm:thrm2} 
\end{theorem}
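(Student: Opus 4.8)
The plan is to establish matching lower and upper bounds, $\lambda_i \ge (1-o(1))\sqrt{\Delta_i}$ and $\lambda_i \le (1+o(1))\sqrt{\Delta_i}$, both driven by the elementary fact that the star $K_{1,d}$ has spectral radius exactly $\sqrt d$. Throughout I use the power-law degree profile of $G_t$ as a companion to Theorem~\ref{thrm:thrm1}: whp the $j$-th largest degree satisfies $\Delta_j = \Theta(\sqrt{t/j})$ for indices up to a slowly growing bound, and any two of the very-high-degree vertices share only a negligible (relative to $\sqrt t$) number of common neighbours, the latter extracted from the planar triangulation structure of $G_t$.

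For the lower bound, let $v_1,\dots,v_k$ realize $\Delta_1,\dots,\Delta_k$ and attach to each the normalized Perron vector of its star: mass $1/\sqrt 2$ on $v_i$ and $1/\sqrt{2\Delta_i}$ on each of its $\Delta_i$ neighbours, giving a unit vector $\phi_i$ with $\phi_i^{\mathsf T} A \phi_i = \sqrt{\Delta_i}$. Since $\Delta_i = \Theta(\sqrt t)$, the off-diagonal Gram entries $\langle \phi_i,\phi_j\rangle$ and the cross Rayleigh terms $\phi_i^{\mathsf T} A \phi_j$ are controlled by the codegree of $v_i,v_j$ divided by $\sqrt t$, hence are $o(1)$ and $o(\sqrt{\Delta_k})$ respectively whp. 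Thus $\phi_1,\dots,\phi_k$ are asymptotically orthonormal and $A$ is asymptotically diagonal on their span, and applying the Courant--Fischer max-min characterization to $\mathrm{span}(\phi_1,\dots,\phi_i)$ yields $\lambda_i \ge (1-o(1))\sqrt{\Delta_i}$.

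For the upper bound I would \emph{not} peel off only the top $k$ vertices: the $(k+1)$-st star already contributes an eigenvalue of the same order $\sqrt{\Delta_{k+1}} = \Theta(t^{1/4})$, so that decomposition costs a constant factor rather than $1+o(1)$. Instead I peel off the top $K = K(t)$ vertices for some $K \to \infty$ chosen (using the upper tail of the degree distribution) so that $\Delta_{K+1} = o(\sqrt t)$ whp. Let $B$ be the adjacency matrix of the vertex-disjoint star forest centred at these $K$ vertices, obtained by discarding the few shared leaves; its nonzero eigenvalues are exactly $\pm\sqrt{\Delta_j'}$ with $\Delta_j' = (1-o(1))\Delta_j$, so $\lambda_i(B) = (1\pm o(1))\sqrt{\Delta_i}$ for $i \le k$. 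Writing $A = B + H$, Weyl's inequality gives $\lambda_i(A) \le \lambda_i(B) + \|H\|$, so it suffices to prove $\|H\| = o(t^{1/4}) = o(\sqrt{\Delta_k})$.

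The main obstacle is exactly this bound on $\|H\|$. The leftover graph $H$ has maximum degree $\Delta_{K+1} = \Theta(\sqrt{t/K}) = o(\sqrt t)$, but the trivial estimate $\|H\| \le \Delta_{\max}(H) \approx \sqrt{t/K}$ is far too weak: I need the square-root-type bound $\|H\| = O(\sqrt{\Delta_{\max}(H)}\,) = o(t^{1/4})$, which fails for arbitrary graphs of that max degree. To obtain it I would either (i) run the trace/moment method, estimating $\|H\|^{2\ell} \le \mathrm{Tr}(H^{2\ell})$ by counting closed walks of length $2\ell$ and using the sparsity and the bounded local structure of the random triangulation to show the count is $\Delta_{\max}(H)^{\ell}$ times a polynomial factor; or (ii) use a neighbour-degree bound of the form $\lambda_1(H)^2 \le \max_v \sum_{u \sim v} d_H(u)$ together with the degree anti-correlation in RANs, whereby the neighbours of a high-degree vertex are typically of low degree, making each such sum $o(\sqrt t)$. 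Combining this with the lower bound gives $\lambda_i = (1\pm o(1))\sqrt{\Delta_i}$. I expect step (i)/(ii), controlling the spectral norm of the sparse remainder, to be the only genuinely hard part; the star constructions and the Weyl/Courant--Fischer bookkeeping are routine once the structural whp facts refining Theorem~\ref{thrm:thrm1} are in hand.
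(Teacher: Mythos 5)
Your architecture is in fact the paper's own: peel off a vertex-disjoint star forest $F$ centred on a growing set of vertices, discard the few leaves shared between centres, note that the nonzero eigenvalues of $F$ are $\pm\sqrt{\Delta_i(F)}$ with $\Delta_i(F)=(1-o(1))\Delta_i$, and reduce everything via Weyl/Courant--Fischer to the spectral bound $\lambda_1(H)=o(t^{1/4})$ for the remainder $H=G-F$ (your separate lower-bound construction with star Perron vectors is correct but redundant once $F$ and $\lambda_1(H)$ are available; it also quietly assumes two facts that are not free, namely the profile $\Delta_j=\Theta(\sqrt{t/j})$ for \emph{growing} $j$ --- Theorem~\ref{thrm:thrm1} is only for fixed $k$ --- and smallness of codegrees among top vertices, whose substitute in the paper is Lemma~\ref{lem:lemma8}). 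The genuine gap is exactly the step you flag yourself: $\lambda_1(H)=o(t^{1/4})$ is left unproven, and neither of your two candidate strategies closes it as stated. Strategy (ii), $\lambda_1(H)^2\le\max_v\sum_{u\sim v}d_H(u)$, combined only with pointwise degree bounds of the type $d_t(r)\le t^{1/2+\epsilon}r^{-1/2}$ (Lemma~\ref{lem:lemma7}), yields nothing better than Gershgorin: a vertex born near $t^{1/8}$ may have degree $t^{29/64}$ and, absent joint information, neighbours of comparable degree, giving only $\lambda_1(H)\le t^{29/64}$; the ``degree anti-correlation'' needed to do better is itself an unproven structural statement requiring control of degrees of \emph{adjacent pairs}. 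Strategy (i) is a program, not an argument, and would need the same structural inputs to make the walk counts work. A further structural weakness: you peel by degree rank, so your centre set and hence $H$ are anchored to a random, time-delocalized vertex set, which is precisely what makes its edges hard to control.

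What the paper does instead is stratify by \emph{age}: $S_1,S_2,S_3$ with $t_1=t^{1/8}$, $t_2=t^{9/16}$ (whp $S_1$ contains the top $k$ vertices, by the machinery behind Theorem~\ref{thrm:thrm1}), $F$ is the $S_1$-to-$(S_3\setminus S_3')$ star forest, and $H$ is split into the six pieces $H[S_i]$, $H(S_i,S_j)$, each bounded by elementary degree estimates. The two mechanisms that replace your hard step are: (a) Lemma~\ref{lem:lemma7} applied \emph{at the right time} --- every edge inside $S_1\cup S_2$ already exists at time $t_2$, so the relevant degrees are the small degrees at times $t_1,t_2$, not at time $t$; and (b) every incoming vertex sends exactly $3$ edges to older vertices, so in each bipartite piece the younger side has degree at most $3$. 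One caveat worth knowing: as literally written, Lemma~\ref{lem:lemma10} uses only max-degree bounds, and two of its stated exponents ($t^{297/1024}$ for $H_{12}$ and $t^{29/64}$ for $H_{23}$) actually exceed $t^{1/4}$; to get the claimed $o(t^{1/4})$ one must invoke the bipartite refinement $\lambda_1(H(S_i,S_j))\le\sqrt{\Delta_L\Delta_R}$ together with observation (b), exactly as in the Flaxman--Frieze--Fenner analysis \cite{flaxman} this section adapts, giving $\lambda_1(H_{12})\le\sqrt{3\,t^{297/1024}}$ and $\lambda_1(H_{23})\le\sqrt{3\,t^{29/64}}$, both $o(t^{1/4})$. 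So your instinct that the remainder norm is the crux is right, but the resolution is not a generic trace/moment computation: it is time-stratification plus the constant back-degree of the model, and without that idea your proposal does not constitute a proof.
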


\noindent Also, we prove the following theorem for the diameter. 

\begin{theorem}[Diameter] 
The diameter $d(G_t)$ of $G_t$ satisfies in probability $ d(G_t) \leq \rho \log{t}$
where $\frac{1}{\rho}=\eta$ is the unique solution greater than 1 of the equation $\eta - 1 - \log{\eta} = \log{3}$.
\label{thrm:ternary} 
\end{theorem}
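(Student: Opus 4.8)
The plan is to bound the diameter by a constant multiple of the maximum graph-distance from a vertex to the initial triangle, and to control that ``depth'' by a first-moment argument over the ternary recursion tree underlying $G_t$. First I would record the recursive structure of the model: each insertion replaces one triangular face by three, so the faces form a ternary tree $T_t$ whose root is the initial face, whose internal nodes each have three children, and whose $\approx 2t$ leaves are the faces present at time $t$; at each step a uniformly random leaf is expanded. For a vertex $v$ inserted into the face with corners $a,b,c$, a shortest path to $\{1,2,3\}$ must leave $v$ through one of these three corners, so its distance satisfies $g(v)=1+\min\{g(a),g(b),g(c)\}$. Consequently $d(G_t)\le 2\max_v g(v)+O(1)$, and it suffices to show $\max_v g(v)\le(1+o(1))\tfrac{\rho}{2}\log t$ \whp.

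Next I would set up the first moment on the tree. Writing $Z_k(t)$ for the number of faces at tree-depth $k$, expansion gives the linear recursion $\mathbb{E}[Z_k(t+1)\mid\mathcal{F}_t]=Z_k(t)+\frac{3Z_{k-1}(t)-Z_k(t)}{2t+1}$, so the generating function $P(t,x)=\sum_k\mathbb{E}[Z_k(t)]x^k$ satisfies $P(t+1,x)=P(t,x)\frac{2t+3x}{2t+1}$ and hence solves explicitly as $P(t,x)=\prod_{s=0}^{t-1}\frac{2s+3x}{2s+1}$ (note $P(t,1)=2t+1$, the face count). Using $\sum_{s<t}\frac1{2s+1}=\tfrac12\log t+O(1)$ gives $\mathbb{E}[Z_k(t)]=(1+o(1))\,3^k e^{-\tau}\tau^k/k!$ with $\tau:=\tfrac12\log t$, so the depth of a random face is asymptotically Poisson and the ternary branching contributes the factor $3^k$. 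The point where this refinement to \emph{distance} enters is the minimum in the recursion for $g$: a new vertex lies within distance one of the \emph{shallowest} of its three corners, so the relevant depth advances only when the minimizing corner is uniquely attained, which makes the distance grow markedly slower than the raw tree-depth.

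I would then estimate $\Prob{\max_v g(v)\ge\ell}\le\mathbb{E}[\#\{v:g(v)\ge\ell\}]$ by counting vertices all of whose three corners are already deep, track this through the tree against the Poisson timing factor, and optimize. Applying Stirling to $3^k e^{-\tau}\tau^k/k!$ with $k=\beta\tau$, the exponential rate vanishes exactly when $\beta\log(3e/\beta)=1$; substituting $\beta=1/\eta$ turns this into $\eta-1-\log\eta=\log3$. This equation has two roots, and the min-over-three-corners refinement is precisely what selects the one with $\eta>1$ (the smaller value of $\beta$), giving $\max_v g(v)\le(1+o(1))\tfrac12\rho\log t$ and hence $d(G_t)\le(1+o(1))\rho\log t$ \whp with $\rho=1/\eta$. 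Checking that the tail sum is $o(1)$ for $\ell=(1+\varepsilon)\tfrac12\rho\log t$, for every $\varepsilon>0$, finishes the argument.

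The main obstacle is exactly this min-over-three-corners feature. It is responsible both for the correct (smaller) growth rate and for selecting the root $\eta>1$, yet it makes the depth process non-Markovian in the single statistic ``minimum corner depth,'' since whether the minimum increments depends on whether it is attained uniquely, i.e.\ on the multiplicities and gaps among the three corner depths. Turning the clean tree-depth first moment into a sharp statement about distance therefore requires either a careful auxiliary bookkeeping of corner-depth multiplicities or a union bound over descending ancestral paths that retains the ternary factor $3^k$ while paying the correct timing cost; verifying that the resulting threshold is governed by the $\eta>1$ root of $\eta-1-\log\eta=\log3$, rather than the spurious second root, is the crux of the proof.
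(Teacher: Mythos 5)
Your skeleton coincides with the paper's: the same faces-as-ternary-tree bijection (uniform leaf expanded into three children), the same reduction $d(G_t)\le 2\max_v d(v,\{1,2,3\})+O(1)$ with distance controlled by a depth-type quantity, and a profile computation $\mathbb{E}[Z_k(t)]\approx 3^k e^{-\tau}\tau^k/k!$, $\tau=\tfrac12\log t$, which is correct and is in effect a self-contained version of what the paper gets by citation: the paper simply invokes Broutin--Devroye for the height of the random ternary tree, $(1+o(1))\tfrac{\rho}{2}\log t$ in probability, and doubles it. The paper makes \emph{no} use of the min-over-three-corners structure; it bounds distance by raw face depth. Your algebra relating the first-moment threshold $\beta\log(3e/\beta)=1$ (with $k=\beta\tau$) to $\eta-1-\log\eta=\log 3$ under $\beta=1/\eta$ is also right. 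The proposal breaks at the root selection, and this is a genuine error, not a presentational one.

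For a first-moment bound on a maximum you need $\sum_{k\ge \ell}\mathbb{E}[Z_k(t)]=o(1)$, and the rate $\beta\log(3e/\beta)-1$ is \emph{positive} for every $\beta$ strictly between the two roots $\beta_-\approx 0.304$ and $\beta^+\approx 7.08$; so the first moment yields only $\max\mathrm{depth}\le(\beta^++o(1))\tau$, and by Broutin--Devroye the height genuinely concentrates at $\beta^+\tau$: the smaller root $\beta_-$ is the fill-up level of the profile and cannot be reached by any upper-bound refinement of the depth. Your proposed mechanism, that the minimum over the three corner depths ``selects'' the root with $\eta>1$ (i.e.\ $\beta_-$), is left unproven (you flag it yourself as the crux) and is in fact false: it would give $\max_v g(v)\le (1+o(1))\,0.152\log t$ and $d(G_t)\le(1+o(1))\,0.304\log t$, contradicting Albenque--Marckert (cited in this very paper), by which the distance between two random internal vertices tends to $\tfrac{6}{11}\log t\approx 0.545\log t$; since any two vertices are joined through the initial triangle, this forces $\max_v d(v,\{1,2,3\})\ge(\tfrac{3}{11}-o(1))\log t$, above your claimed maximum. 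The consistent reading of the theorem, and what your own computation (or the cited height theorem) actually delivers, takes $\eta$ to be the root in $(0,1)$, $\eta\approx 0.141$, hence $\rho\approx 7.08$; the ``greater than $1$'' in the statement is a slip that you inherited and then tried to engineer an argument for (note that with $\eta>1$ the asserted tree height $\tfrac{\rho}{2}\log t\approx 0.15\log t$ would already be below the trivial bound $\log_3(2t+1)$). The repair is to delete the min-over-corners step entirely: bound $\max_v g(v)$ by the maximum face depth, run your first moment at the larger root $\beta^+=\rho$, and conclude $d(G_t)\le(1+o(1))\rho\log t$ --- exactly the paper's argument. (Two minor points: your $g(v)=1+\min\{g(a),g(b),g(c)\}$ holds only as the inequality $d(v,\{1,2,3\})\le g(v)$, since later insertions add neighbors to $v$, but the inequality is all that is needed; and the min structure does make the true diameter constant smaller than twice the height constant, but extracting that gain is the hard open refinement, not what this theorem asserts.)
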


The outline of the paper is as follows: in Section~\ref{sec:ran} we describe the model and discuss
existing work on RANs. In Sections~\ref{sec:degree} and~\ref{sec:eigen}  we prove Theorems~\ref{thrm:thrm1}
and~\ref{thrm:thrm2} respectively. 
In Section~\ref{sec:diameter}  we give a simple proof for the asymptotic growth of the diameter
and prove Theorem~\ref{thrm:ternary}. Finally in Section~\ref{sec:waiting} 
we investigate another property of the model.
Finally, in Section~\ref{sec:concl} we conclude by proposing several problems for future work.

\section{Random Apollonian Networks}
\label{sec:ran} 

\begin{figure}
  \centering
  \subfloat[$t=1$]{\label{fig:fig1a}\includegraphics[width=0.2\textwidth]{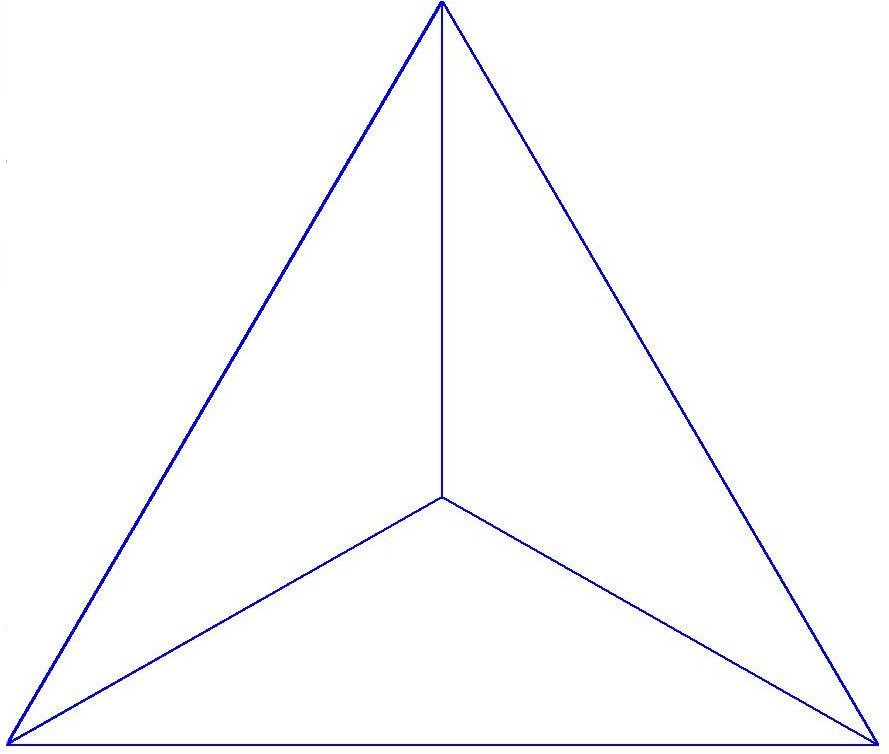}}                
  \subfloat[$t=2$]{\label{fig:fig1b}\includegraphics[width=0.2\textwidth]{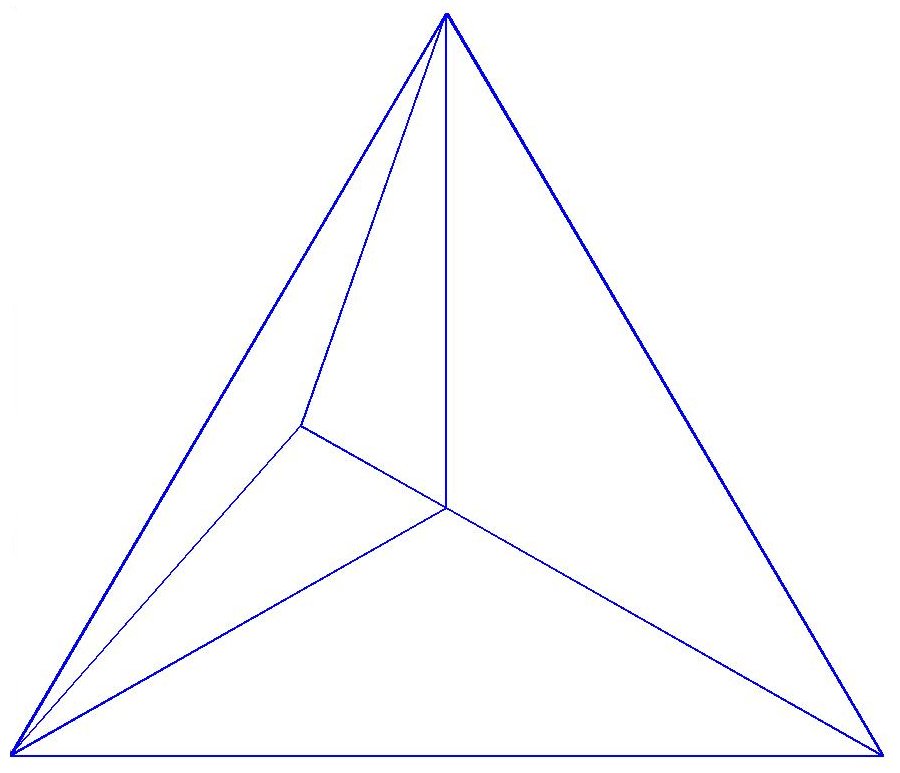}}
  \subfloat[$t=3$]{\label{fig:fig1c}\includegraphics[width=0.2\textwidth]{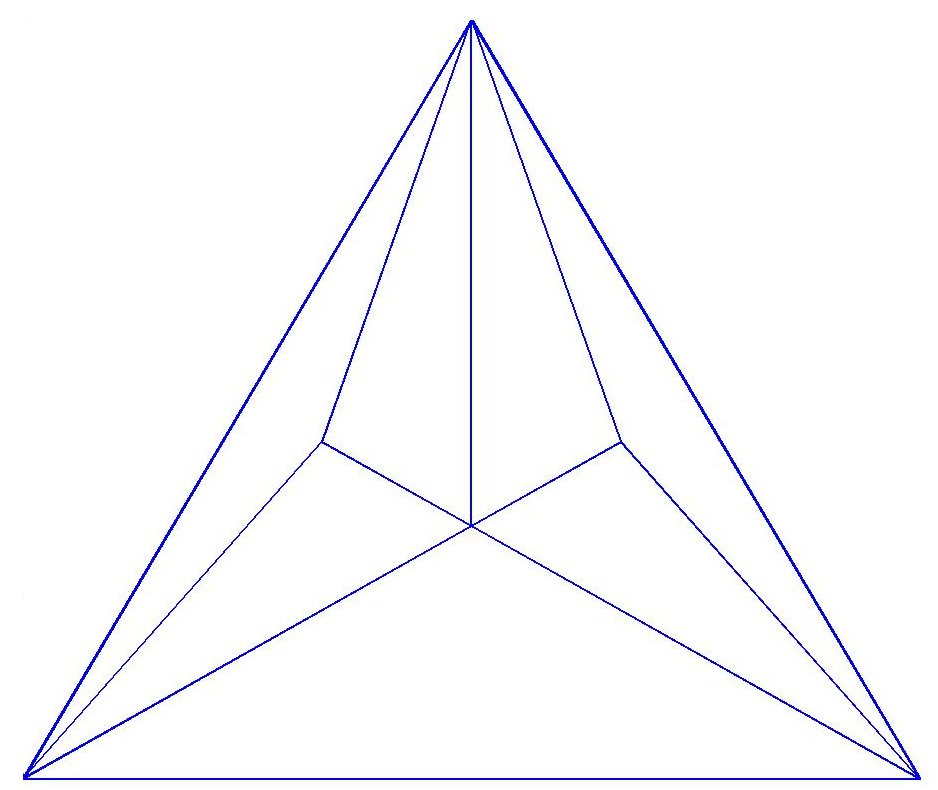}}
  \subfloat[$t=100$]{\label{fig:fig1d}\includegraphics[width=0.2\textwidth]{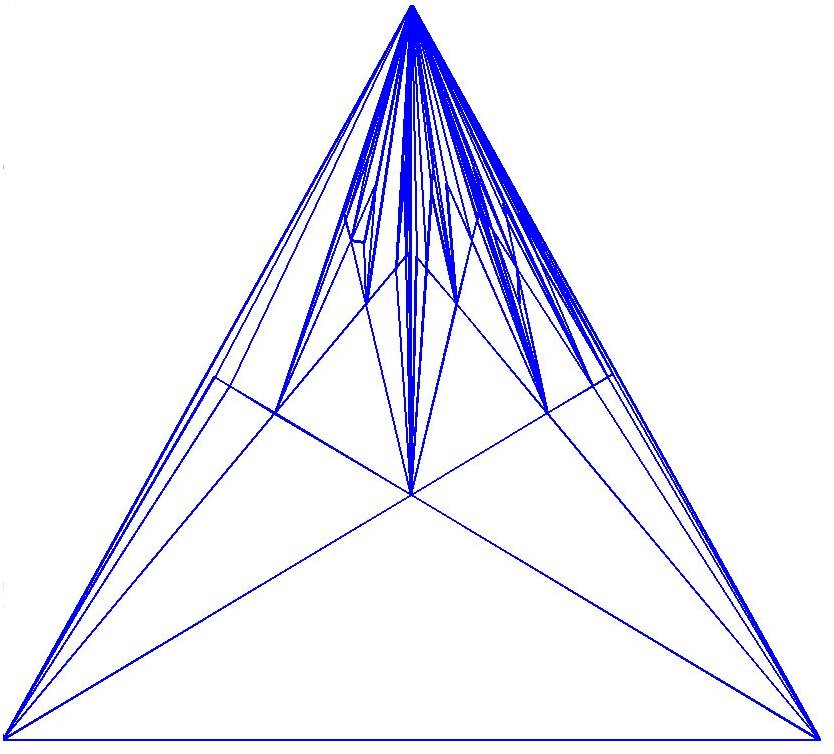}} 
  \caption{Snapshots of a Random Apollonian Network (RAN) at: (a) $t=1$ (b) $t=2$ (c) $t=3$ (d) $t=100$. }
  \label{fig:fig1}
\end{figure}

Apollonius of Perga was a Greek geometer and astronomer noted for his writings on conic sections.
Apollonius introduced the problem of space filling packing of spheres which in its classical solution, 
the Apollonian packing --see \cite{graham} for remarkable properties of Apollonian circles-, exhibits a power law behavior. Specifically,  
the circle size distribution follows a power law with exponent of about 1.3 \cite{boyd}.
Apollonian Networks (ANs)  were introduced in \cite{adrade} and independently in \cite{doye}. 
Zhou et al. \cite{zhou} introduced the Random Apollonian Networks (RANs) and  gave formulae for their order, size, degree distribution (using heuristic arguments), clustering coefficient and diameter. 
High dimensional RANs were introduced in \cite{zhang}. 
The degree distribution of RANs was shown to follow a power law in \cite{comment,zhou}. 
Other properties have also been analyzed, e.g., average distance of two vertices \cite{darase2}, properties of the connectivity profile \cite{darase3}.  Zhou et al. \cite{zhou} proposed a simple rule that generates
a random two-dimensional Apollonian networks with very large clustering coefficient.
RANs are planar 3-trees, a special case of random $k$-trees \cite{kloks}. 
The general result of the degree distribution of random $k$-trees was proved by Cooper \& Uehara and Gao \cite{cooper,gao}. 
In RANs --in contrast to the general model of random $k$ trees-- the random $k$ clique 
chosen at each step has never previously been selected. For example in the two dimensional 
case any chosen triangular face is being subdivided into three new triangular faces by connecting
the incoming vertex to the vertices of the boundary. 
Random $k$-trees due to their power law properties have been proposed as a 
model for complex networks, see, e.g., \cite{cooper,gao2} and references therein.
Recently, a variant of $k$-trees, namely ordered increasing $k$-trees has been proposed and analyzed in \cite{panholzer}.

An example of a two dimensional RAN is shown in Figure~\ref{fig:fig1}. The RAN generator takes as input a parameter $T_{\max}$ equal to the number of iterations the algorithm will perform and runs as follows: 

\begin{itemize}
 \item Let $G_0$ be a triangle $(1,2,3)$ embedded on the plane, e.g., as an equilateral triangle.
 \item {\it for} $t \leftarrow 1$ to $T_{\max}$: 
 \begin{itemize}
  \item  Sample a face $F_t=(i,j,k)$ of the planar  graph $G_{t-1}$ uniformly at random.
  \item  Insert the vertex $t+3$ inside this face (e.g., in the barycenter of the corresponding triangle) and
   draw the three edges $(i,t+3)$, $(j,t+3)$,$(k,t+3)$, e.g., as straight lines.
 \end{itemize}
\end{itemize}

It's worth pointing out why we expect this model to yield a power law degree distribution:
consider any vertex $v$ at some time $t$. Vertex $v$ sees a wheel graph around it (except for vertices 
1,2,3 who see a wheel modulo one edge) as Figure~\ref{fig:fig2} shows. Since we pick a face uniformly
at random the probability that $v$ increases its degree by 1 is (roughly) proportional to its degree. 
Therefore, we expect this process to result in a graph, which is obviously planar, and which \whp
\footnote{A sequence of events $\mathcal{E}_n$ occurs with high probability \whp if $\lim_{n \rightarrow +\infty} \Prob{\mathcal{E}_n}=1$.}
has properties similar to those generated by a preferential attachment process \cite{albert}. 
Therefore, it should not come  as a surprise that existing techniques developed in the context of
preferential attachment models can be adapted to solve problems concerning the structure 
of Random Apollonian Networks. 

Finally, we shall make use of the following formulae for the number of vertices ($n_t$), edges ($m_t$) and faces ($F_t$) at time $t$ in a RAN $G_t$:
$$n_t=t+3,~~ m_t=3t+3,~~ F_t=2t+1.$$ 
\noindent Note that a RAN is a maximal planar graph since for any planar graph $m_t \leq 3n_t - 6 \leq 3t+3$.

\begin{figure}
  \centering
   {\label{fig:fig2a}\includegraphics[width=0.25\textwidth]{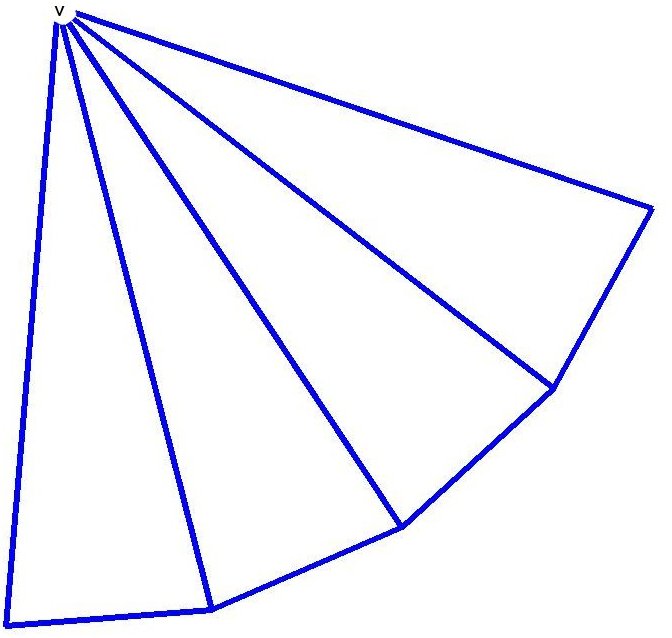}}                
   {\label{fig:fig2b}\includegraphics[width=0.25\textwidth]{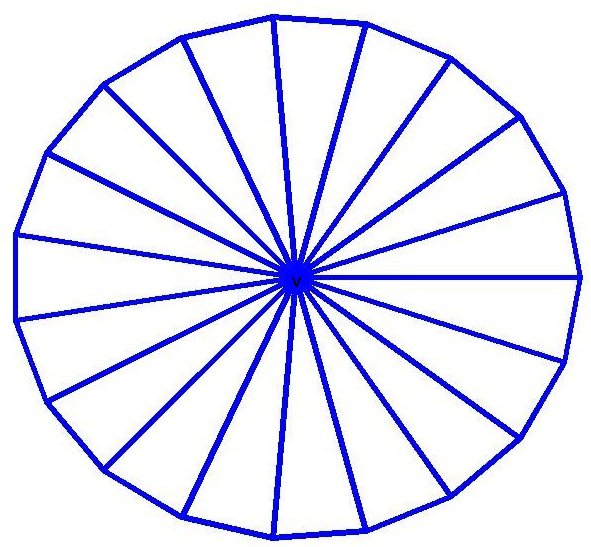}}
  \caption{What each vertex $v$ sees at time $t$: (a) Vertices $1,2,3$ (b) The other vertices $ \{4,\ldots,t+3\}$. 
  Since we pick a face uniformly at random the probability that the degree of $v$ increases by one is (roughly) proportional to its degree. 
  Therefore, we expect this process to generate a planar graph with properties similar to those generated by a preferential attachment    
  process. }
  \label{fig:fig2}
\end{figure}

\section{Highest Degree Vertices: Proof of Theorem~\ref{thrm:thrm1}} 
\label{sec:degree} 

We decompose the proof of Theorem~\ref{thrm:thrm1} into several lemmas
which we prove in the following. The proof of Theorem~\ref{thrm:thrm1}
follows from Lemmas~\ref{lem:lemma2},~\ref{lem:lemma3},~\ref{lem:lemma4},~\ref{lem:lemma5},~\ref{lem:lemma6}.
We partition the vertices into three sets: those added before $t_0$, before
$t_1$ and after $t_1$ where $t_0=\log{\log{\log{(f(t))}}}$ and $t_1=\log{\log{(f(t))}}$.
We define a supernode to be a collection of vertices and the degree of the supernode the sum of the degrees 
of its vertices. 

\begin{lemma} 
Let $d_t(s)$ denote the degree of vertex $s$ at time $t$. and let $a^{(k)}=a(a+1)\ldots (a+k-1)$\
denote the rising factorial function. Then, for any positive integer $k$

\begin{equation}
\Mean{ d_t(s)^{(k)} } \leq \frac{ (k+2)!}{2} \big( \frac{2t}{s} \big)^{\frac{k}{2}}.
\label{eq:rising} 
\end{equation}

\label{lem:lemma1}
\end{lemma}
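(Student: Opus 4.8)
The plan is to track the evolution of a single vertex's degree as a time-inhomogeneous pure-birth chain and to turn the resulting one-step transition into a clean multiplicative recursion for the rising-factorial moments.

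\medskip

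\noindent\textbf{Step 1 (degree dynamics).} First I would record the key structural fact about RANs: an internal vertex $v$ (added at some time $\geq 1$) is, at every subsequent time, incident to exactly as many faces as its current degree. This holds at creation (degree $3$, incident to the $3$ faces into which the chosen triangle is subdivided) and is preserved inductively: whenever a face incident to $v$ is selected and subdivided, $v$ gains one edge and its incident-face count goes up by exactly one (one old face $(v,a,b)$ is replaced by two faces $(v,a,w),(v,b,w)$). Since a face of $G_t$ is chosen uniformly among the $F_t = 2t+1$ faces, it follows that, writing $D_t = d_t(s)$,
$$\Prob{D_{t+1} = D_t + 1 \given D_t} = \frac{D_t}{2t+1}, \qquad \Prob{D_{t+1} = D_t \given D_t} = 1 - \frac{D_t}{2t+1}.$$

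\noindent\textbf{Step 2 (recursion for rising moments).} Next I would feed these transition probabilities into the rising factorial. Using the identity $(D+1)^{(k)} = \frac{D+k}{D}\, D^{(k)}$, the two cases combine to give
$$\Mean{D_{t+1}^{(k)} \given D_t} = D_t^{(k)}\left(1 - \frac{D_t}{2t+1}\right) + (D_t+1)^{(k)}\frac{D_t}{2t+1} = D_t^{(k)}\left(1 + \frac{k}{2t+1}\right),$$
so that, taking expectations, $\Mean{D_{t+1}^{(k)}} = \left(1 + \frac{k}{2t+1}\right)\Mean{D_t^{(k)}}$. The cancellation that eliminates the quadratic term and leaves the exact factor $1 + \frac{k}{2t+1}$ is the crux of the computation, and it is precisely why the rising factorial (rather than an ordinary power) is the right quantity to propagate.

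\medskip

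\noindent\textbf{Step 3 (unroll and bound).} Finally I would unroll the recursion from the creation time, where $D_s = 3$ deterministically and hence $D_s^{(k)} = 3^{(k)} = 3\cdot 4 \cdots (k+2) = \frac{(k+2)!}{2}$, to obtain
$$\Mean{D_t^{(k)}} = \frac{(k+2)!}{2}\prod_{\tau=s}^{t-1}\left(1 + \frac{k}{2\tau+1}\right).$$
It then remains to bound the product by $(2t/s)^{k/2}$. For this I would first use $\log\left(1+\frac{k}{2\tau+1}\right) \le \frac{k}{2\tau+1}$, and then the elementary inequality $\log(1+1/\tau) \ge \frac{2}{2\tau+1}$ to write $\frac{k}{2\tau+1} \le \frac{k}{2}\log\frac{\tau+1}{\tau}$; summing over $\tau$ telescopes to $\frac{k}{2}\log(t/s)$, whence the product is at most $(t/s)^{k/2} \le (2t/s)^{k/2}$, which even leaves the slack factor $2^{k/2}$ to spare.

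The only genuine obstacle is Step 1—pinning down the exact increment probability via the invariant that the number of faces incident to $v$ equals its degree—since everything afterward is mechanical: the recursion of Step 2 and its telescoping bound in Step 3 follow with no further difficulty, and the stated constant $\frac{(k+2)!}{2}$ emerges directly from the base case $D_s=3$.
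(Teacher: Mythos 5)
Your argument is the paper's own argument for the main case: the same face-count invariant giving the transition probability $\frac{D_t}{2t+1}$, the same rising-factorial cancellation yielding the exact factor $1+\frac{k}{2t+1}$, and the same unrolling from the base case $3^{(k)}=\frac{(k+2)!}{2}$. The only cosmetic difference is in bounding the product: you telescope via $\log(1+1/\tau)\geq\frac{2}{2\tau+1}$ (which is valid, since $\log(1+x)\geq\frac{2x}{2+x}$ for $x\geq 0$), whereas the paper compares $\sum \frac{k}{2t'-1}$ with the integral $\int_s^t \frac{dx}{2x-1}$; both give $(t/s)^{k/2}$ up to the slack already present in $(2t/s)^{k/2}$.

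There is, however, one genuine omission: your Step 1 invariant (degree equals number of incident faces) fails for the three initial vertices $1,2,3$, which see a wheel minus one edge, so their incident-face count is $d_t(s)-1$ and the increment probability is $\frac{D_t-1}{2t+1}$. Your proof therefore covers only $s\geq 4$, while the lemma is stated for all $s$ and is later invoked with $s$ running from $1$ (in the bound on vertices added before $t_1$). The paper treats this as a separate case, and the fix is short: for $s\in\{1,2,3\}$ one computes $\Mean{D_{t+1}^{(k)}\mid D_t}=D_t^{(k)}\bigl(1+\frac{k}{2t+1}\bigr)-D_t^{(k)}\frac{k}{(2t+1)D_t}\leq D_t^{(k)}\bigl(1+\frac{k}{2t+1}\bigr)$, so the same multiplicative factor still upper-bounds each step; since the unique face at time $0$ is certainly subdivided at the first step, these vertices have degree $3$ deterministically at time $1$, and the recursion can be started there with the same base $3^{(k)}$. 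You should add this case (or at least note that the correction term is negative), after which the proposal is complete.
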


\begin{proof} 
We distinguish two types of vertices\footnote{Despite the fact that 
the results don't change asymptotically, we treat both cases in Lemma~\ref{lem:lemma1}. 
This analysis will be omitted in the subsequent lemmas.}
since --as it can also be seen in Figure~\ref{fig:fig2}-- the three initial vertices $1,2,3$
have  one face less than their degree whereas all other vertices have degree equal to the number
of faces.

\noindent \underline{$\bullet$ {\sc Case 1} $s \geq 4$:}\\

\noindent  Note that $d_s(s)=3$ and that $ \frac{2t}{s} \geq \frac{2t-1}{2s-1}$. By conditioning successively we obtain 

\begin{align*} 
\Mean{d_t(s)^{(k)}}                &= \Mean{ \Mean{ d_t(s)^{(k)} | d_{t-1}(s)} }\\
                                   &= \Mean{ (d_{t-1}(s))^{(k)} \big( 1-\frac{d_{t-1}(s)}{2t-1} \big) +  (d_{t-1}(s)+1)^{(k)} \frac{d_{t-1}(s)}{2t-1}} \\ 
                                   &=  \Mean{  (d_{t-1}(s))^{(k)} \big( 1-\frac{d_{t-1}(s)}{2t-1} \big) + (d_{t-1}(s))^{(k)} \frac{d_{t-1}(s)+k}{d_{t-1}(s)} \frac{d_{t-1}(s)}{2t-1}}\\
                                   &=  \Mean{ (d_{t-1}(s))^{(k)}}\big( 1+\frac{k}{2t-1} \big) = ... = 3^{(k)} \prod_{t'=s+1}^t ( 1+\frac{k}{2t'-1} ) \\
                                   &\leq  3^{(k)}  \exp{\Big( \sum_{t'=s+1}^t \frac{k}{2t'-1} \Big)} \leq 3^{(k)} \exp{ \Big( k \int_{s}^{t} \! \frac{\mathrm{d}x}{2x-1}   \,   \Big) }  \\ 
                                   &\leq \frac{(k+2)!}{2} \exp{ \Big( k \log{ \frac{2t-1}{2s-1} } \Big)} \leq \frac{ (k+2)!}{2} \Big( \frac{2t}{s} \Big)^{\frac{k}{2}}. 
\end{align*}

\noindent \underline{$\bullet$ {\sc Case 2} $s \in \{1,2,3\}$:}\\

\noindent  Note that initially the degree of any such vertex is 2. For any $k\geq 0$

\begin{align*} 
\Mean{d_t(s)^{(k)}}                &= \Mean{ \Mean{ d_t(s)^{(k)} | d_{t-1}(s)} }\\
                                   &= \Mean{ (d_{t-1}(s))^{(k)} \big( 1-\frac{d_{t-1}(s)-1}{2t-1} \big) +  (d_{t-1}(s)+1)^{(k)} \frac{d_{t-1}(s)-1}{2t-1} }\\ 
                                   &=  \Mean{ (d_{t-1}(s))^{(k)} \big( 1+\frac{k}{2t-1} \big) - (d_{t-1}(s))^{(k)} \frac{k}{(2t-1)d_{t-1}(s)}  }\\
                                   &\leq \Mean{ (d_{t-1}(s))^{(k)}} \big( 1+\frac{k}{2t-1} \big) \leq \ldots \leq \frac{ (k+2)!}{2} \big( \frac{2t}{s} \big)^{\frac{k}{2}}.
\end{align*}

\end{proof}

\begin{lemma}
The degree $X_t$ of the supernode $V_{t_0}$ of vertices added before time $t_0$ is at least $t_0^{1/4}\sqrt{t}$ \whp.
\label{lem:lemma2}
\end{lemma}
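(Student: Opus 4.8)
The plan is to track the \emph{degree of the supernode} $X_t=\sum_{v\in V_{t_0}}d_t(v)$ as a single aggregate quantity through its one-step conditional expectation, rather than controlling the individual vertices of $V_{t_0}$. The structural observation that drives everything is that a RAN is a triangulation in which, apart from the three initial vertices, the number of (bounded) faces incident to a vertex equals its degree. Hence, summing over the $2t-1$ faces of $G_{t-1}$, we have $\sum_{F}\Abs{V(F)\cap V_{t_0}}=\sum_{v\in V_{t_0}}d_{t-1}(v)-3=X_{t-1}-3$, the $-3$ accounting for vertices $1,2,3$. At the step producing $G_t$ we choose one of these $2t-1$ faces uniformly, and the new vertex (which is \emph{not} in $V_{t_0}$) raises $X$ by exactly the number of chosen-face corners lying in $V_{t_0}$. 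Writing $\mathcal{F}_{t-1}$ for the history up to time $t-1$, this gives
\begin{equation}
\Mean{X_t\given \mathcal{F}_{t-1}}=X_{t-1}+\frac{X_{t-1}-3}{2t-1},
\label{eq:supernode-rec}
\end{equation}
and setting $C_t:=X_t-3$ collapses this to the clean multiplicative recursion $\Mean{C_t\given\mathcal{F}_{t-1}}=C_{t-1}\bigl(1+\tfrac{1}{2t-1}\bigr)$.

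Iterating \eqref{eq:supernode-rec} from the deterministic initial value $C_{t_0}=X_{t_0}-3=2m_{t_0}-3=6t_0+3$ yields
\begin{equation}
\Mean{C_t}=C_{t_0}\prod_{t'=t_0+1}^{t}\Bigl(1+\frac{1}{2t'-1}\Bigr)=\Theta\bigl(\sqrt{t_0\,t}\bigr),
\label{eq:supernode-mean}
\end{equation}
since $\sum_{t'=t_0+1}^t\frac{1}{2t'-1}=\tfrac12\log\frac{t}{t_0}+O(1)$. Thus the target $t_0^{1/4}\sqrt t$ is only a vanishing fraction $\Theta(t_0^{-1/4})$ of the mean, so a crude lower-tail estimate suffices and I do not expect to need sharp martingale/Azuma machinery.

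For concentration I would bound $\Var{C_t}$ by a second-moment recursion. Writing $C_t=C_{t-1}+\Delta_t$ with $\Delta_t\in\{0,1,2,3\}$ and $\Mean{\Delta_t\given\mathcal{F}_{t-1}}=\tfrac{C_{t-1}}{2t-1}$, the slack $\Delta_t^2\le 3\Delta_t$ gives
\begin{equation}
\Mean{C_t^2\given\mathcal{F}_{t-1}}\le C_{t-1}^2\Bigl(1+\frac{2}{2t-1}\Bigr)+\frac{3C_{t-1}}{2t-1}.
\label{eq:supernode-second}
\end{equation}
Subtracting the exact recursion for $\Mean{C_t}^2$ and taking expectations produces a linear recursion for $V_t:=\Var{C_t}$ whose homogeneous factor is $\prod(1+\tfrac{2}{2t'-1})=\Theta(t/t_0)$ and whose forcing term is $O(\sqrt{t_0}/\sqrt{t'})$; solving it with $V_{t_0}=0$ gives $V_t=O(t)$. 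Hence $\Var{C_t}=O(t)=o(t_0t)=o(\Mean{C_t}^2)$, and Chebyshev's inequality yields
\begin{equation}
\Prob{C_t<\tfrac12\Mean{C_t}}\le \frac{4\Var{C_t}}{\Mean{C_t}^2}=O\!\Bigl(\frac{1}{t_0}\Bigr)\to 0.
\label{eq:supernode-cheb}
\end{equation}
Since $\tfrac12\Mean{C_t}=\Theta(\sqrt{t_0\,t})\gg t_0^{1/4}\sqrt t$ and $X_t=C_t+3$, this establishes $X_t\ge t_0^{1/4}\sqrt t$ \whp.

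I expect the main obstacle to be the structural counting behind \eqref{eq:supernode-rec}: because the network is a triangulation, the expected one-step growth of the supernode degree is governed \emph{exactly} by its current degree, so the aggregate behaves like a near-martingale with the same $\sqrt t$ growth as a single high-degree vertex. Everything downstream is routine, and the generous gap between the mean $\Theta(\sqrt{t_0t})$ and the target $t_0^{1/4}\sqrt t$ makes the second-moment bound more than adequate. The one point requiring care is verifying that the leading $\Theta(t_0t)$ contributions of $\Mean{C_t^2}$ and $\Mean{C_t}^2$ cancel to leave $\Var{C_t}=O(t)$; this is exactly where the bound $\Delta_t^2\le 3\Delta_t$ and the precise initial condition $C_{t_0}=\Theta(t_0)$ enter.
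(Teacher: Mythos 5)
Your proof is correct, but it takes a genuinely different route from the paper. The paper couples $X_t$ with a dominated process $\mathcal{Y}$ that increases by exactly $1$ whenever the chosen face touches the supernode (so $X_t\ge Y_t$), derives an explicit upper bound on $p^*=\Prob{Y_t=d_0+r\mid Y_{t_0}=d_0}$ by summing path probabilities over all insertion-time vectors (Claim~\ref{claim:lemma2}, a lengthy appendix computation in the style of Flaxman--Frieze--Fenner), and then union-bounds over all $r\le t_0^{1/4}\sqrt{t}-(6t_0+6)$ to show $\Prob{\mathcal{A}_1}=o(1)$. You instead exploit the exact structural identity that in $G_{t-1}$ the faces incident to the supernode number $X_{t-1}-3$ (degree equals incident-face count except for the deficiency $1$ at each of $1,2,3$), which makes $\Mean{X_t\mid\mathcal{F}_{t-1}}$ an exact function of $X_{t-1}$ alone; with $C_t=X_t-3$ this gives the clean recursion $\Mean{C_t\mid\mathcal{F}_{t-1}}=C_{t-1}\bigl(1+\tfrac{1}{2t-1}\bigr)$, hence $\Mean{C_t}=\Theta(\sqrt{t_0 t})$, and your second-moment recursion via $\Delta_t^2\le 3\Delta_t$ correctly yields $\Var{C_t}=O(t)$ (the forcing term $O(\sqrt{t_0/s})$ against the homogeneous factor $O(t/s)$ sums to $O(t)$), so Chebyshev gives failure probability $O(1/t_0)=o(1)$ since $t_0\to\infty$, and $\tfrac12\Mean{C_t}=\Theta(\sqrt{t_0 t})\gg t_0^{1/4}\sqrt{t}$. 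Your argument is shorter and more elementary, avoiding both the coupling and the appendix entirely; what it buys less of is tail strength --- Chebyshev gives only a polynomially small (in $t_0$) error probability where the paper's explicit bound on $p^*$ decays much faster --- and note the paper's appendix machinery is not wasted, since it is reused in the proof of Lemma~\ref{lem:lemma6}. Two small points to make explicit if you write this up: the constant in $\Mean{C_t}\ge c\sqrt{t_0 t}$ must be absolute (it is, since $\sum_{t'=t_0+1}^{t}\tfrac{1}{2t'-1}=\tfrac12\log(t/t_0)+O(1)$ uniformly), and the threshold comparison $\tfrac{c}{2}t_0^{1/2}\ge t_0^{1/4}$ requires $t_0$ large, which holds because $f(t)\to\infty$ forces $t_0=\log\log\log f(t)\to\infty$.
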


\begin{proof} 

We consider a modified process $\mathcal{Y}$ coupled with the RAN process, see also Figure~\ref{fig:fig3}.
Specifically, let $Y_t$ be the modified degree of the supernode in the modified process $\mathcal{Y}$ which 
is defined as follows: for any type of insertion in the original RAN process --note there exist three types of insertions with respect 
to how the degree $X_t$ of the supernode (black circle) gets affected, see also Figure~\ref{fig:fig3}-- 
$Y_t$ increases by 1. We also define $X_{t_0}=Y_{t_0}$.  Note that $X_t \geq Y_t$ for all $t \geq t_0$.
Let $d_0=X_{t_0}=Y_{t_0}=6t_0+6$ and  $p^*=\Prob{Y_t=d_0+r| Y_{t_0}=d_0}$.

\begin{figure*}
\centering
\includegraphics[width=0.5\textwidth]{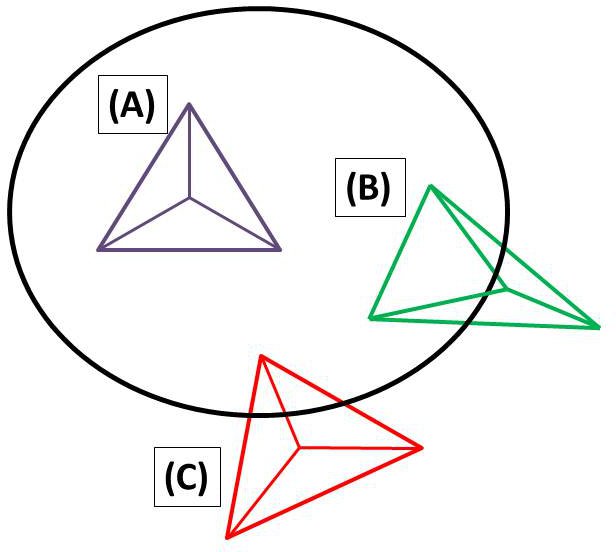}
\caption{Coupling of Lemma~\ref{lem:lemma2}.}
\label{fig:fig3}
\end{figure*}

\noindent The following technical claim is proved in the Appendix~\ref{sec:appendix}.
\begin{claim}
$$  p^* \leq  {d_0+r-1 \choose d_0-1} \Big(\frac{2t_0+3}{2t+1}\Big)^{d_0/2} e^{\frac{3}{2}+t_0-\frac{d_0}{2}+\frac{2r}{3\sqrt{t}}} $$
\label{claim:lemma2}
\end{claim}

Let $\mathcal{A}_1$ denote the event that the supernode consisting of the first $t_0$ vertices has degree $Y_t$
in the modified process $\mathcal{Y}$ less than $t_0^{1/4}\sqrt{t}$. 
Note that since $\{ X_t \leq t_0^{1/4} \sqrt{t}\} \subseteq \{ Y_t \leq t_0^{1/4}\sqrt{t}\} $ it 
suffices to prove that $\Prob{Y_t \leq t_0^{1/4}\sqrt{t}} =  o(1)$. Using Claim~\ref{claim:lemma2} we obtain

\begin{align*} 
\Prob{\mathcal{A}_1} &\leq \sum_{r=0}^{ t_0^{1/4}\sqrt{t} -(6t_0+6)} {r+6t_0+5 \choose 6t_0+5}  \Big(\frac{2t_0+3}{2t+1}\Big)^{3t_0+3} e^{-\frac{3}{2}-2t_0+\frac{2t_0^{1/4}}{3}} \\
&\leq t_0^{1/4}t^{1/2} \frac{\big( t_0^{1/4}t^{1/2} \big)^{6t_0+5}}{(6t_0+5)!} \Big(\frac{2t_0+3}{2t+1}\Big)^{3t_0+3}
e^{-\frac{3}{2}-2t_0+\frac{2t_0^{1/4}}{3}} \\ 
&\leq \Big( \frac{t}{2t+1} \Big)^{3t_0+3}  \frac{t_0^{3t_0/2+3/2} (2t_0+3)^{3t_0+3}}{(6t_0+5)^{6t_0+5}} e^{4t_0+7/2+2/3 t_0^{1/4}}\\
&\leq 2^{-(3t_0+3)} \frac{e^{4t_0+7/2+2/3 t_0^{1/4}}}{(6t_0+5)^{\tfrac{3}{2}t_0+\tfrac{1}{2}}}  =o(1).
\end{align*}

\end{proof}

\begin{lemma}
No vertex added after $t_1$ has degree exceeding $t_0^{-2}t^{1/2}$ \whp.
\label{lem:lemma3}
\end{lemma}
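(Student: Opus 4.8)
\emph{Proof plan.} The plan is to combine the rising-factorial moment bound of Lemma~\ref{lem:lemma1} with Markov's inequality to control the degree of a single late vertex, and then to take a union bound over all vertices inserted after time $t_1$. The idea is that a vertex inserted at time $s>t_1$ is born late enough that its expected degree is small, and that the resulting tail probabilities sum to $o(1)$ because $t_1\to+\infty$.

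First I would fix a constant integer $k\geq 3$ and set $\lambda=t_0^{-2}t^{1/2}$. For a fixed vertex inserted at time $s$ with $s>t_1$, I would use that $d_t(s)^{(k)}\geq d_t(s)^k$, so that $\{d_t(s)\geq\lambda\}\subseteq\{d_t(s)^{(k)}\geq\lambda^k\}$, and then apply Markov's inequality to the moment bound~\eqref{eq:rising}:
\begin{align*}
\Prob{d_t(s)\geq\lambda}
  \;\leq\; \Prob{d_t(s)^{(k)}\geq\lambda^{k}}
  \;\leq\; \frac{\Mean{d_t(s)^{(k)}}}{\lambda^{k}}
  \;\leq\; \frac{(k+2)!}{2}\,\frac{(2t/s)^{k/2}}{t_0^{-2k}\,t^{k/2}}
  \;=\; \frac{(k+2)!}{2}\,2^{k/2}\,t_0^{2k}\,s^{-k/2}.
\end{align*}
Note that the factor $t^{k/2}$ cancels, leaving a bound that decays in $s$.

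Next I would union bound over all vertices added after $t_1$, i.e.\ over $t_1<s\leq t$. Since $k\geq 3$ gives $k/2>1$, the tail of the zeta-type sum converges, $\sum_{s>t_1}s^{-k/2}=O\big(t_1^{1-k/2}\big)$, and therefore
\begin{align*}
\Prob{\exists\, s>t_1:\ d_t(s)\geq\lambda}
  \;\leq\; \frac{(k+2)!}{2}\,2^{k/2}\,t_0^{2k}\sum_{s>t_1}s^{-k/2}
  \;=\; O\big(t_0^{2k}\,t_1^{1-k/2}\big).
\end{align*}
Finally I would exploit the relation between the two thresholds: since $t_0=\log\log\log(f(t))$ and $t_1=\log\log(f(t))$ we have $\log t_1=t_0$, i.e.\ $t_1=e^{t_0}$. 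Hence $t_0^{2k}t_1^{1-k/2}=t_0^{2k}e^{(1-k/2)t_0}$, whose logarithm $2k\log t_0+(1-\tfrac{k}{2})t_0$ tends to $-\infty$ because the linear term $(1-\tfrac{k}{2})t_0\leq-\tfrac{1}{2}t_0$ dominates the $o(t_0)$ term $2k\log t_0$. Thus the failure probability is $o(1)$, which establishes the lemma.

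The main obstacle is the interplay between the two thresholds $t_0$ and $t_1$. The target degree $t_0^{-2}t^{1/2}$ lies only a factor $t_0^{-2}$ below the typical largest degree $\Theta(t^{1/2})$, so the per-vertex tail bound carries a growing prefactor $t_0^{2k}$ that must be overcome by the decay $t_1^{1-k/2}$ produced by the union bound. The computation succeeds precisely because $t_1$ is exponentially larger than $t_0$ (a double versus triple iterated logarithm of $f$), so that the exponential decay in $t_0$ beats the polynomial growth $t_0^{2k}$ for any fixed $k\geq 3$; had $t_0$ and $t_1$ grown comparably, no constant choice of $k$ would make the sum vanish.
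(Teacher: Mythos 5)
Your proof is correct and takes essentially the same route as the paper: the paper's argument for Lemma~\ref{lem:lemma3} is exactly this moment computation with $k=3$ --- Markov's inequality applied to the rising-factorial bound of Lemma~\ref{lem:lemma1}, then a union bound over $s>t_1$, yielding $O\big(t_0^{6}\,t_1^{-1/2}\big)=o(1)$. Your only deviations --- keeping $k\geq 3$ general and explicitly noting $t_1=e^{t_0}$ to justify the final $o(1)$ --- are harmless refinements of the same calculation.
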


\begin{proof} 

Let $\mathcal{A}_2$ denote the event that some vertex added after $t_1$ has
degree exceeding $t_0^{-2}t^{1/2}$. We use a union bound, a third moment argument
and Lemma~\ref{lem:lemma1} to prove that $\Prob{\mathcal{A}_2}=o(1)$. 
Specifically 

\begin{align*}
\Prob{\mathcal{A}_2} &\leq  \sum_{s=t_1}^t \Prob{d_t(s) \geq t_0^{-2}t^{1/2}} = \sum_{s=t_1}^t \Prob{d_t(s)^{(3)} \geq (t_0^{-2}t^{1/2})^{(3)}} \\ 
                     &\leq  t_0^{6}t^{-3/2} \sum_{s=t_1}^t \Mean{d_t(s)^{(3)}} \leq 5!\sqrt{2} t_0^6 \sum_{s=t_1}^t s^{-3/2}            \leq 5!2\sqrt{2} t_0^6 t_1^{-1/2} =o(1). 
\end{align*} 

\end{proof}

\begin{lemma}
No vertex added before $t_1$ has degree exceeding $t_0^{1/6}t^{1/2}$ \whp. 
\label{lem:lemma4}
\end{lemma}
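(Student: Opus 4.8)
The plan is to reuse the union-bound and moment machinery of Lemma~\ref{lem:lemma3}, now applied to the low-index vertices and with a larger degree threshold. Let $\mathcal{A}_3$ be the event that some vertex added before $t_1$ has degree exceeding $t_0^{1/6}t^{1/2}$. These are the vertices indexed by $s\leq t_1$, so a union bound gives
\[
\Prob{\mathcal{A}_3}\leq\sum_{s=1}^{t_1}\Prob{d_t(s)\geq t_0^{1/6}t^{1/2}}.
\]
As in Lemma~\ref{lem:lemma1}, the three initial vertices satisfy the same rising-factorial bound as the rest, so no separate treatment of $s\in\{1,2,3\}$ is required.

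First I would bound each summand by Markov's inequality applied to the third rising-factorial moment. Since $a^{(3)}\geq a^3$ we have $(t_0^{1/6}t^{1/2})^{(3)}\geq t_0^{1/2}t^{3/2}$, while Lemma~\ref{lem:lemma1} with $k=3$ gives $\Mean{d_t(s)^{(3)}}\leq\tfrac{5!}{2}(2t/s)^{3/2}$. Hence
\[
\Prob{d_t(s)\geq t_0^{1/6}t^{1/2}}=\Prob{d_t(s)^{(3)}\geq(t_0^{1/6}t^{1/2})^{(3)}}\leq\frac{5!}{2}\Big(\frac{2t}{s}\Big)^{3/2}t_0^{-1/2}t^{-3/2}=5!\sqrt{2}\,s^{-3/2}t_0^{-1/2},
\]
where the powers of $t$ cancel exactly.

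Next I would sum over $s$. Because the exponent $3/2>1$, the series $\sum_{s=1}^{t_1}s^{-3/2}$ stays bounded by the absolute constant $\zeta(3/2)$ no matter how large $t_1$ grows, the dominant contribution coming from the smallest indices $s\in\{1,2,3\}$, i.e. the vertices of largest expected degree. Therefore
\[
\Prob{\mathcal{A}_3}\leq 5!\sqrt{2}\,\zeta(3/2)\,t_0^{-1/2}=O\big(t_0^{-1/2}\big),
\]
and since $t_0=\log\log\log f(t)\to\infty$ as $t\to\infty$ this is $o(1)$, which proves the claim \whp.

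The one point deserving care is the order of the moment. The first or second moment would leave a divergent sum $\sum_s s^{-k/2}$ over the low-index vertices (where degrees are largest), so the union bound would not close; the third moment is the smallest that makes $\sum_s s^{-k/2}$ converge to a constant, while the inflated threshold $t_0^{1/6}$ simultaneously supplies the vanishing factor $t_0^{-k/6}=t_0^{-1/2}$. Thus a fixed $k=3$ already suffices and no growing moment order is needed, which keeps the estimate as clean as the one in Lemma~\ref{lem:lemma3}.
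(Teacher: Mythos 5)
Your proposal is correct and follows essentially the same route as the paper: a union bound over $s\leq t_1$, Markov's inequality applied to the third rising-factorial moment via Lemma~\ref{lem:lemma1} with $k=3$ (so that $(t_0^{1/6}t^{1/2})^{(3)}\geq t_0^{1/2}t^{3/2}$ and the powers of $t$ cancel), and the convergent sum $\sum_{s\geq 1}s^{-3/2}=\zeta(3/2)$ yielding $\Prob{\mathcal{A}_3}\leq 5!\sqrt{2}\,\zeta(3/2)\,t_0^{-1/2}=o(1)$. Your closing remark on why $k=3$ is the minimal adequate moment order is a sound observation that the paper leaves implicit.
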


\begin{proof} 
Let $\mathcal{A}_3$ denote the event that some vertex added before $t_1$ 
has degree exceeding $t_0^{1/6}t^{1/2}$. We use again a third moment argument
and Lemma~\ref{lem:lemma1} to prove that  $\Prob{\mathcal{A}_3}=o(1)$.

\begin{align*} 
\Prob{\mathcal{A}_3} &\leq \sum_{s=1}^{t_1} \Prob{ d_t(s) \geq t_0^{1/6}t^{1/2}} = \sum_{s=1}^{t_1} \Prob{ d_t(s)^{(3)} \geq (t_0^{1/6}t^{1/2})^{(3)} } \\ 
           &\leq t_0^{-1/2}t^{-3/2} \sum_{s=1}^{t_1} \Mean{ d_t(s)^{(3)} }  \leq t_0^{-1/2}t^{-3/2} \sum_{s=1}^{t_1} 5!\sqrt{2} \frac{t^{3/2}}{s^{3/2}} \\ 
           &\leq 5!\sqrt{2}\zeta(3/2)t_0^{-1/2} =o(1)
\end{align*} 

\noindent where $\zeta(3/2)=\sum_{s=1}^{+\infty} s^{-3/2} \approx 2.612$.

\end{proof}

\begin{lemma}
The $k$ highest degrees are added before $t_1$ and have degree $\Delta_i$ bounded by 
$ t_0^{-1} t^{1/2} \leq \Delta_i \leq t_0^{1/6}t^{1/2}$  \whp.
\label{lem:lemma5}
\end{lemma}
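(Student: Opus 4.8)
The plan is to read the upper bound straight off the preceding lemmas and to extract the lower bound together with the localization of the top $k$ degrees from a short averaging argument that plays Lemma~\ref{lem:lemma2} against Lemma~\ref{lem:lemma4}. For the upper bound, Lemma~\ref{lem:lemma4} already caps every vertex inserted before $t_1$ at $t_0^{1/6}t^{1/2}$, while Lemma~\ref{lem:lemma3} caps every vertex inserted after $t_1$ at $t_0^{-2}t^{1/2}\le t_0^{1/6}t^{1/2}$; hence \whp no vertex at all exceeds $t_0^{1/6}t^{1/2}$, giving $\Delta_i\le t_0^{1/6}t^{1/2}$ for every $i$. So the real content is to produce at least $k$ vertices, all inserted before $t_1$, whose degrees are each at least $t_0^{-1}t^{1/2}$.

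The heart of the argument is a counting estimate on the supernode $V_{t_0}$, which contains at most $t_0+2$ vertices (the three roots together with those added in steps $1,\dots,t_0-1$). First I would condition on the two \whp events that $X_t\ge t_0^{1/4}t^{1/2}$ (Lemma~\ref{lem:lemma2}) and that every vertex of $V_{t_0}$ has degree at most $t_0^{1/6}t^{1/2}$ (Lemma~\ref{lem:lemma4}); their intersection still holds \whp. Suppose, toward a contradiction, that on this intersection fewer than $k$ vertices of $V_{t_0}$ have degree at least $t_0^{-1}t^{1/2}$. Then the at most $k-1$ heavy vertices contribute at most $(k-1)t_0^{1/6}t^{1/2}$ to $X_t$, while the remaining at most $t_0+2$ vertices each contribute less than $t_0^{-1}t^{1/2}$, for a further contribution below $(t_0+2)t_0^{-1}t^{1/2}=(1+o(1))t^{1/2}$. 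Summing,
$$ X_t \le (k-1)t_0^{1/6}t^{1/2} + (1+o(1))t^{1/2} = (k-1+o(1))\,t_0^{1/6}t^{1/2}. $$
Since $k$ is fixed and $t_0\to\infty$, we have $(k-1)t_0^{1/6}=o(t_0^{1/4})$, so the right-hand side is $o(t_0^{1/4}t^{1/2})$, contradicting $X_t\ge t_0^{1/4}t^{1/2}$. Hence \whp at least $k$ vertices of $V_{t_0}$ have degree at least $t_0^{-1}t^{1/2}$.

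To finish, I would observe that these $\ge k$ vertices all lie in $V_{t_0}$, hence were inserted before $t_0<t_1$, and each has degree $\ge t_0^{-1}t^{1/2}$, which strictly exceeds the ceiling $t_0^{-2}t^{1/2}$ that Lemma~\ref{lem:lemma3} places on every vertex inserted after $t_1$. Consequently no post-$t_1$ vertex can rank among the $k$ largest degrees, so the top $k$ degrees are all realized before $t_1$; moreover the existence of $k$ vertices of degree at least $t_0^{-1}t^{1/2}$ forces $\Delta_k\ge t_0^{-1}t^{1/2}$, and therefore $\Delta_i\ge t_0^{-1}t^{1/2}$ for all $i\le k$ by monotonicity. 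The step I expect to be the crux is the comparison of the exponents $t_0^{1/6}$ and $t_0^{1/4}$ in the averaging bound: the whole argument rests on the multiplicative gap $t_0^{1/4-1/6}=t_0^{1/12}\to\infty$ beating the fixed factor $k-1$, so one must be sure that the per-vertex cap of Lemma~\ref{lem:lemma4} is genuinely smaller, in the exponent of $t_0$, than the total-degree lower bound of Lemma~\ref{lem:lemma2}. Everything else is routine bookkeeping inside the two \whp events.
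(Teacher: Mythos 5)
Your proof is correct and takes essentially the same route as the paper's: the upper bound is read off Lemmas~\ref{lem:lemma3} and~\ref{lem:lemma4}, and the lower bound uses the identical contradiction --- fewer than $k$ heavy vertices would force the supernode degree down to $O(t_0^{1/6}\sqrt{t})$, contradicting the $t_0^{1/4}\sqrt{t}$ bound of Lemma~\ref{lem:lemma2}, with the exponent gap $t_0^{1/4-1/6}=t_0^{1/12}\to\infty$ you single out being precisely the crux, after which Lemma~\ref{lem:lemma3} localizes the top $k$ vertices before $t_1$ exactly as in the paper. Your write-up merely makes explicit the averaging computation that the paper leaves implicit.
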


\begin{proof} 
For the upper bound it suffices to show that $\Delta_1 \leq t_0^{1/6}t^{1/2}$. 
This follows immediately by Lemmas~\ref{lem:lemma3} and~\ref{lem:lemma4}.
The lower bound follows directly from Lemmas~\ref{lem:lemma2},~\ref{lem:lemma3} and~\ref{lem:lemma4}.
Assume that at most $k-1$ vertices added before $t_1$ have degree exceeding the lower bound 
$t_0^{-1} t^{1/2}$. Then the total degree of the supernode formed by the first $t_0$ vertices
is $O(t_0^{1/6}\sqrt{t})$. This contradicts Lemma~\ref{lem:lemma2}. 
Finally, since each vertex $s\geq t_1$ has degree at most $t_0^{-2}\sqrt{t} \ll t_0^{-1} t^{1/2}$
the $k$ highest degree vertices are added before $t_1$ \whp.
\end{proof}

\begin{lemma}
The $k$ highest degrees satisfy $ \Delta_{i} \leq \Delta_{i-1} - \frac{\sqrt{t}}{f(t)}$ \whp.
\label{lem:lemma6}
\end{lemma}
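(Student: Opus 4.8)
The plan is to reduce the claim to an anti-concentration statement for pairs of early vertices and then establish that statement through the martingale limits of the individual degrees. First, by Lemma~\ref{lem:lemma5} we may work on the \whp event that the $k$ largest degrees are realized by vertices inserted before time $t_1$, of which there are at most $t_1+3$, and that each such degree lies in $[t_0^{-1}\sqrt t,\,t_0^{1/6}\sqrt t]$. On this event it suffices to show that \whp \emph{no two} of these $O(t_1)$ early vertices have degrees within $\sqrt t/f(t)$ of one another, since that at once forces consecutive order statistics among the top $k$ to differ by at least $\sqrt t/f(t)$. Since enlarging $f$ only weakens the desired inequality while the hypotheses $t_0,t_1\to\infty$ persist, we may prove the statement for an arbitrarily slowly growing $f$; in particular we may assume $\sqrt t/f(t)\gg(\log t)^2$, which renders additive errors of size $O((\log t)^2)$ below harmless.

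Fix two early vertices $s\neq s'$. Writing $b_{s,t}=\prod_{t'=s+1}^{t}\bigl(1+\tfrac1{2t'-1}\bigr)$, the one-step computation in Lemma~\ref{lem:lemma1} with $k=1$ shows that $M_t(s):=d_t(s)/b_{s,t}$ is a nonnegative martingale, and the case $k=2$ of Lemma~\ref{lem:lemma1} gives $\Mean{M_t(s)^2}=O(1)$, so $M_t(s)$ is $L^2$-bounded and converges almost surely and in $L^2$ to a limit $M_\infty(s)$. Since $b_{s,t}=(1+o(1))\,c_s\sqrt t$ for a constant $c_s>0$, this yields $d_t(s)/\sqrt t\to \xi_s:=c_sM_\infty(s)$. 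The crucial input is that $\xi_s$ has a \emph{bounded density}: the degree obeys the pure-birth recursion $d_{t'}(s)=d_{t'-1}(s)+\mathrm{Bernoulli}\!\bigl(d_{t'-1}(s)/(2t'-1)\bigr)$, whose continuous-time analogue is a Yule process with an absolutely continuous (Gamma) martingale limit, and the same self-similar structure makes $M_\infty(s)$ absolutely continuous with a bounded density.

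I then pass from the individual limits to anti-concentration of the difference. The two degree processes are only weakly coupled: at any time the edge $ss'$ borders exactly two faces, so $s$ and $s'$ share at most two faces and hence increment \emph{simultaneously} on at most $O(\log t)$ steps in expectation, so on $O((\log t)^2)$ steps \whp. This lets me compare $(d_t(s),d_t(s'))$ with a pair of independent pure-birth chains having the same marginals, the two pairs differing by only $O((\log t)^2)$ \whp. For the independent pair the rescaled difference converges to a difference of two independent absolutely continuous variables, hence has a bounded density and satisfies $\Prob{|\,\cdot\,|\le \varepsilon}=O(\varepsilon)$; transferring this back through the $O((\log t)^2)\ll\sqrt t/f(t)$ coupling error gives $\Prob{|d_t(s)-d_t(s')|\le \sqrt t/f(t)}=O(1/f(t))$ for all large $t$.

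Finally I union bound over the at most $\binom{t_1+3}{2}=O(t_1^2)$ pairs of early vertices. Because $t_1=\log\log f(t)$, we have $t_1^2/f(t)\to0$, so the probability that some pair of early vertices has degrees within $\sqrt t/f(t)$ is $O(t_1^2/f(t))=o(1)$, which completes the argument. I expect the anti-concentration step to be the main obstacle: one must establish, quantitatively and uniformly in $t$, that the normalized degrees converge to absolutely continuous limits with bounded densities, and control the dependence between two distinct degree processes well enough that their difference inherits such a density at the scale $1/f(t)$.
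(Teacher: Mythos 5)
Your high-level architecture is the same as the paper's: restrict to the $O(t_1)$ vertices inserted before $t_1$ (Lemma~\ref{lem:lemma5}), show that \whp no two of them have degrees within $\sqrt{t}/f(t)$ of each other, and union bound over pairs, treating adjacent pairs separately because of the at most two shared faces (the paper's Case~2 handles this with a modified process $\mathcal{Y}'$ that redirects insertions away from the common faces while preserving the degree difference, essentially your observation). However, the anti-concentration core of your argument contains a genuine gap, and it is precisely the place where the paper does its real work. Your coupling claim is false: a one-unit perturbation of a degree at time $t'$ does not remain $O(1)$ in this preferential-attachment-type dynamics, but is amplified multiplicatively, growing in expectation to order $\sqrt{t/t'}$ by time $t$. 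Moreover, the dependence between $d_t(s)$ and $d_t(s')$ is not confined to simultaneous-increment steps: for a non-adjacent pair the increment events are mutually exclusive at \emph{every} step, and the per-step total-variation gap between the true joint increment law and the independent one is of order $d_{t'}(s)d_{t'}(s')/t'^2\asymp 1/t'$, whose sum is $\Theta(\log t)$ --- so any natural coupling to independent pure-birth chains fails at some step \whp, and an early failure lets the chains drift apart by $\Theta(\sqrt{t})$, not $O((\log t)^2)$. The conclusion $\Prob{|d_t(s)-d_t(s')|\le \sqrt{t}/f(t)}=O(1/f(t))$ therefore does not follow from the argument given.

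Separately, even granting joint convergence of $(d_t(s)/\sqrt{t},\,d_t(s')/\sqrt{t})$ to an absolutely continuous limit (your marginal claim is sound --- the moments in Lemma~\ref{lem:lemma1} identify a Gamma-type limit with bounded density), weak convergence without a rate only yields, for each \emph{fixed} pair and fixed $\epsilon$, that $\limsup_t \Prob{|d_t(s)-d_t(s')|\le \epsilon\sqrt{t}}=O(\epsilon)$. Your union bound needs an estimate that is uniform over the growing family of $\Theta(t_1^2)$ pairs and valid at the moving scale $\sqrt{t}/f(t)$, i.e., a quantitative local-limit or point-mass bound; your (legitimate) reduction to slowly growing $f$ does not remove this need. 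This is exactly what the paper supplies: it union bounds additionally over the integer value $l$ of the difference and proves the pointwise estimate $\max_l \Prob{d_t(s_1)-d_t(s_2)=l \mid \bar{\mathcal{A}_3}}=o\big(f(t)/(t_1^2\sqrt{t})\big)$ by conditioning on the degrees at time $t_1$ and summing the explicit insertion-time bound of Claim~\ref{claim:lemma2}, thereby sidestepping any independence or coupling claim. You flagged the anti-concentration step as the main obstacle, and indeed, as written, your proposal does not overcome it.
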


\begin{proof} 
Let $\mathcal{A}_4$ denote the event that there are two vertices among the first $t_1$
with degree $t_0^{-1} t^{1/2}$ and within $\frac{\sqrt{t}}{f(t)}$ of each other. 
By the definition of conditional probability and Lemma~\ref{lem:lemma3}

\begin{align*}
\Prob{\mathcal{A}_4}  &= \Prob{\mathcal{A}_4|\bar{\mathcal{A}_3}} \Prob{\bar{\mathcal{A}_3}} + \Prob{\mathcal{A}_4|\mathcal{A}_3}\Prob{\mathcal{A}_3} \leq \Prob{\mathcal{A}_4|\bar{\mathcal{A}_3}} + o(1) 
\end{align*} 

\noindent it suffices to show that $\Prob{\mathcal{A}_4 | \bar{\mathcal{A}_3}}=o(1)$. Note that by a simple union bound

\begin{align*}
\Prob{\mathcal{A}_4} &\leq \sum_{1 \leq s_1 < s_2 \leq t_1} \sum_{l=-\frac{\sqrt{t}}{f(t)}}^{\frac{\sqrt{t}}{f(t)}} p_{l,s_1,s_2} =O\Big(t_1^2 \frac{\sqrt{t}}{f(t)} \max{ p_{l,s_1,s_2} }\Big)
\end{align*}

\noindent where $p_{l,s_1,s_2} = \Prob{d_t(s_1)-d_t(s_2)=l|\bar{\mathcal{A}_3}}$. 

We consider two cases and we show that in both cases $\max{ p_{l,s_1,s_2}}=o(\frac{f(t)}{t_1^2\sqrt{t}})$.

\noindent
\underline{$\bullet$ {\sc Case 1} $(s_1,s_2) \notin E(G_t)$:}\\
 
Note that at time $t_1$ there exist $m_{t_1}=3t_1+3 < 4t_1$ edges in $G_{t_1}$. 

\begin{align*} 
p_{l,s_1,s_2} &\leq \sum_{r=t_0^{-1}t^{1/2}}^{t_0^{1/6}t^{1/2}} \sum_{d_1,d_2=3}^{4t_1} \Prob{d_t(s_1)=r \wedge d_t(s_2)=r-l | d_{t_1}(s_1)=d_1, d_{t_1}(s_2)=d_2} \tag{2}\\ 
              &\leq t_0^{1/6}t^{1/2}  \sum_{d_1,d_2=3}^{4t_1} {2t_0^{1/6}t^{1/2} \choose d_1-1} {2t_0^{1/6}t^{1/2} \choose d_2-1} \Big(\frac{2t_0+3}{2t+1}\Big)^{(d_1+d_2)/2} e^{\frac{3}{2}+t_1+\frac{2t_0^{1/6}}{3}} \tag{3}\\ 
              &\leq t_0^{1/6}t^{1/2} \sum_{d_1,d_2=3}^{4t_1}  (2t_0^{1/6}t^{1/2})^{d_1+d_2-2}  \Big(\frac{2t_0+3}{2t+1}\Big)^{(d_1+d_2)/2} e^{2t_1} \\ 
              &\leq  t_0^{1/6}t^{1/2} e^{2t_1} t_1^2 (2t_0^{1/6}t^{1/2})^{8t_1-2} \Big(\frac{2t_0+3}{2t+1}\Big)^{4t_1} \\
              &= t_0^{4t_1/3+1/6}t^{-1/2} e^{2t_1} t_1^2 2^{8t_1} (2t_0+3)^{4t_1} \Big( \frac{t}{2t+1} \Big)^{4t_1} \\ 
              &= o\Big( \frac{f(t)}{t_1^2\sqrt{t}} \Big)
\end{align*} 

\noindent Note that we omitted the tedious calculation justifying the transition from (2) to (3) 
since calculating the upper bound of the joint probability distribution is very similar to the calculation of Lemma~\ref{lem:lemma2}.

\noindent \\
\underline{$\bullet$ {\sc Case 2} $(s_1,s_2) \in E(G_t)$ :}\\

Notice that in any case $(s_1,s_2)$ share at most two faces (which may change over time). 
Note that the two connected vertices $s_1,s_2$ share a common face only if $s_1,s_2 \in \{1,2,3\}$\footnote{We analyze the case where $s_1,s_2 \geq 4$. The other case is treated in the same manner.}.
Consider the following modified process $\mathcal{Y}'$: whenever an incoming vertex ``picks'' one of the two common faces 
we don't insert it. We choose two other faces which are not common to $s_1,s_2$ and add one vertex in each of those. 
Notice that the number of faces increases by 1 for both $s_1,s_2$ as in the original process and the difference of the degrees
remains the same. An algebraic manipulation similar to Case 1 gives the desired result.
\end{proof}

\section{Largest Eigenvalues of the Adjacency Matrix: Proof of Theorem~\ref{thrm:thrm2}} 
\label{sec:eigen} 

In complex network analysis the spectrum of the adjacency matrix 
is an important aspect of the network or --more generally--  of the network model 
with several applications, e.g., triangle counting \cite{tsourakakis}.
To the best of our knowledge the spectrum of Random Apollonian Networks
has been studied only experimentally \cite{adrade2}. 
Here, we prove Theorem~\ref{thrm:thrm2}. We decompose the proof of the main
theorem in Lemmas~\ref{lem:lemma7},~\ref{lem:lemma8},~\ref{lem:lemma9},~\ref{lem:lemma10}.

Having computed the highest degrees of a RAN in Section~\ref{sec:degree}, eigenvalues are computed 
by adapting existing techniques \cite{mihail,chung}, see also \cite{flaxman} for a closely
related analysis, and taking into account the special properties of the model. 
We decompose the proof of Theorem~\ref{thrm:thrm2} in four lemmas. 
Specifically, in Lemmas~\ref{lem:lemma7},~\ref{lem:lemma8} we bound the degrees and co-degrees respectively. 
Having these bounds, we decompose the graph into a star forest and show in Lemmas~\ref{lem:lemma9} and~\ref{lem:lemma10} that 
its largest eigenvalues, which are $(1\pm o(1))\sqrt{\Delta_i}$, dominate the eigenvalues of the remaining graph.  

We partition the vertices into three set $S_1,S_2,S_3$. Specifically, let $S_i$ be the set of vertices 
added after time $t_{i-1}$ and at or before time $t_i$ where 

$$ t_0=0, t_1=t^{1/8}, t_2=t^{9/16}, t_3=t.$$

In the following we use the recursive variational characterization of eigenvalues \cite{chungbook}. Specifically,
let $A_G$ denote the adjacency matrix of a simple, undirected graph $G$ and let $\lambda_i(G)$ denote
the $i$-th largest eigenvalue of $A_G$. Then

$$ \lambda_i(G) = \min_{S} \max_{x \in S,x \neq 0} \frac{x^TA_Gx}{x^Tx}$$ 

\noindent where $S$ ranges over all $(n-i+1)$ dimensional subspaces of $\field{R}^n$. 

We shall use the following lemma in our proof, specifically in the proof of Lemma~\ref{lem:lemma9}.

\begin{lemma}
For any $\epsilon>0$ and any $f(t)$ with $f(t) \rightarrow +\infty$ as $t \rightarrow +\infty$
the following holds \whp: for all $s$ with $f(t) \leq s \leq t$, for all vertices $r \leq s$,
then $d_s(r) \leq s^{\tfrac{1}{2}+\epsilon}r^{-\tfrac{1}{2}}$. 
\label{lem:lemma7}
\end{lemma}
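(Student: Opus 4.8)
The plan is to combine the rising-factorial moment bound of Lemma~\ref{lem:lemma1} with Markov's inequality and a union bound over all relevant pairs $(s,r)$. Renaming the time and vertex indices in Lemma~\ref{lem:lemma1} (taking time $s$ and vertex $r$), for every fixed positive integer $k$ we have $\Mean{d_s(r)^{(k)}} \leq \frac{(k+2)!}{2}\big(\frac{2s}{r}\big)^{k/2}$ for all $r \leq s$, and this is the only probabilistic input the argument needs.

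First I would convert this moment estimate into a per-pair tail bound. Since $x \mapsto x^{(k)}$ is strictly increasing for $x>0$ and satisfies $x^{(k)} \geq x^k$, Markov's inequality applied at the threshold $c = s^{\frac{1}{2}+\epsilon} r^{-\frac{1}{2}}$ gives
\[
\Prob{d_s(r) \geq c} = \Prob{d_s(r)^{(k)} \geq c^{(k)}} \leq \frac{\Mean{d_s(r)^{(k)}}}{c^{k}} \leq \frac{(k+2)!}{2}\,2^{k/2}\,s^{-k\epsilon},
\]
where the powers of $r$ cancel exactly and $s^{k/2}/s^{(\frac{1}{2}+\epsilon)k}$ leaves precisely the factor $s^{-k\epsilon}$, up to the $k$-dependent constant. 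Note that $c \geq s^{\epsilon} \geq 1$ for $r \leq s$, so the rewriting of the event and the bound $c^{(k)} \geq c^k$ are both legitimate.

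Next I would union bound over all $s$ with $f(t) \leq s \leq t$ and all $r \leq s$. For each $s$ there are at most $s$ choices of $r$, so
\[
\Prob{\exists\, s,r:\ d_s(r) > s^{\frac{1}{2}+\epsilon} r^{-\frac{1}{2}}} \leq \frac{(k+2)!}{2}\,2^{k/2} \sum_{s=f(t)}^{t} s\cdot s^{-k\epsilon} = \frac{(k+2)!}{2}\,2^{k/2} \sum_{s=f(t)}^{t} s^{1-k\epsilon}.
\]
The freedom to choose $k$ is what makes this work: fixing any integer $k > 2/\epsilon$ (say $k = \lceil 3/\epsilon\rceil$) forces the exponent $1-k\epsilon < -1$, so the series $\sum_s s^{1-k\epsilon}$ converges and its tail from $f(t)$ onward tends to $0$ because $f(t) \to +\infty$. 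As $k$ is a constant once $\epsilon$ is fixed, the prefactor $\frac{(k+2)!}{2}2^{k/2}$ is harmless, and the whole probability is $o(1)$, which is exactly the claim.

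The argument is essentially routine given Lemma~\ref{lem:lemma1}; the only point requiring care is the bookkeeping of exponents, namely ensuring that after the factor $s$ produced by the union over $r$ multiplies the per-pair bound $s^{-k\epsilon}$, the resulting exponent $1-k\epsilon$ drops strictly below $-1$. This is precisely why the moment order $k$ must be chosen large relative to $\epsilon$ rather than fixed beforehand, and why the growing cutoff $f(t)$ (in place of a constant lower limit) is essential to make the convergent tail vanish.
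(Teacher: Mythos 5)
Your proposal is correct and follows essentially the same route as the paper: the paper likewise applies Markov's inequality to the rising-factorial moments of Lemma~\ref{lem:lemma1} at the threshold $s^{1/2+\epsilon}r^{-1/2}$, union bounds over all pairs $(s,r)$, and chooses the moment order large relative to $\epsilon$ (the paper takes $q=\Ceil{4/\epsilon}$ where you take $\Ceil{3/\epsilon}$; both make the summed exponent drop below $-1$) so that the tail $\sum_{s\geq f(t)} s^{1-q\epsilon}$, estimated by an integral, is $o(1)$ as $f(t)\to\infty$. Your added care about the monotonicity of $x\mapsto x^{(k)}$ and the bound $c^{(k)}\geq c^k$ is exactly the implicit justification in the paper's computation.
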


\begin{proof} 

Set $q=\Ceil{\frac{4}{\epsilon}}$. We use Lemma~\ref{lem:lemma1}, a simple union bound and Markov's inequality to obtain:

\begin{align*}
\Prob{ \bigcup_{s=f(t)}^t \cup_{r=1}^s \{ d_s(r) \geq s^{1/2+\epsilon} r^{-1/2} \} } &\leq \sum_{s=f(t)}^t \sum_{r=1}^s  \Prob{ d_s(r)^{(q)} \geq (s^{1/2+\epsilon} r^{-1/2})^{(q)}  }\\
&\leq \sum_{s=f(t)}^t \sum_{r=1}^s  \Prob{ d_s(r)^{(q)} \geq (s^{-(q/2+q\epsilon)} r^{q/2}) } \\ 
&\leq \sum_{s=f(t)}^t \sum_{r=1}^s  \frac{(q+2)!}{2} \Big( \frac{2s}{r} \Big)^{q/2} s^{-q/2} s^{-q\epsilon} r^{q/2}  \\
&= \frac{(q+2)!}{2} 2^{q/2} \sum_{s=f(t)}^t s^{1-q\epsilon} \\
&\leq \frac{(q+2)!}{2} 2^{q/2}   \int_{f(t)-1}^t \! x^{1-q\epsilon}  \, \mathrm{d}x    \\
&\leq \frac{(q+2)!}{2(q\epsilon-2)} 2^{q/2} (f(t)-1)^{2-q\epsilon} = o(1).
\end{align*}

\end{proof}

\begin{lemma}
Let $S_3'$ be the set of vertices in $S_3$ which are adjacent to more than one vertex of $S_1$. Then $ |S_3'| \leq t^{1/6}$ \whp.
\label{lem:lemma8} 
\end{lemma}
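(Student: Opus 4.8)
The plan is to exploit the fact that in a RAN a vertex never loses neighbours and only acquires new ones through vertices inserted \emph{after} it. Hence if $v\in S_3$ is added at some time $\tau\in(t_2,t]$, every neighbour of $v$ present before $v$ must be one of the three vertices of the face $F$ into which $v$ was inserted, while every neighbour acquired later lies in $S_3$ (it is added after $\tau>t_2$) and so cannot belong to $S_1$. Consequently $v\in S_3'$ if and only if the face $F$ chosen by $v$ already contained at least two vertices of $S_1$. The whole problem therefore reduces to controlling, at each time step, the number of faces carrying two or more $S_1$-vertices.

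First I would bound that number deterministically. Since edges of a RAN are only ever added incident to a newly inserted vertex, no edge between two $S_1$-vertices is created after time $t_1$; thus the set of edges with both endpoints in $S_1$ is exactly the full edge set of $G_{t_1}$, of fixed size $m_{t_1}=3t_1+3$. A triangular face carrying at least two $S_1$-vertices has at least one such edge as a side, and every edge of a maximal planar graph borders exactly two faces at all times (subdividing one of its two faces merely replaces that face by a new triangle still bordering the edge). Counting $(\text{face},\,S_1\text{-edge})$ incidences then gives that at every time the number of faces with $\geq 2$ vertices of $S_1$ is at most $2m_{t_1}=6t_1+6 = O(t^{1/8})$.

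Next I would turn this into a bound on $\Mean{|S_3'|}$. Conditioning on $G_{\tau-1}$, the vertex inserted at time $\tau$ picks one of the $F_{\tau-1}=2(\tau-1)+1$ faces uniformly, so the conditional probability that it lands on a face with $\geq 2$ $S_1$-vertices is at most $(6t_1+6)/F_{\tau-1}$; taking expectations bounds $\Prob{v_\tau\in S_3'}$ unconditionally. Summing over $\tau$ from $t_2+1$ to $t$ and using $F_{\tau-1}\geq\tau$ gives
\begin{equation*}
\Mean{|S_3'|}\;\leq\;\sum_{\tau=t_2+1}^{t}\frac{6t_1+6}{2(\tau-1)+1}\;=\;O\!\left(t^{1/8}\log\frac{t}{t_2}\right)\;=\;O\!\left(t^{1/8}\log t\right),
\end{equation*}
since $t_1=t^{1/8}$ and $t_2=t^{9/16}$. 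Finally, because $t^{1/8}\log t = o(t^{1/6})$, Markov's inequality yields $\Prob{|S_3'|>t^{1/6}}\leq \Mean{|S_3'|}/t^{1/6}=O(t^{-1/24}\log t)=o(1)$, which is the claim.

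The main obstacle is the structural reduction of the first paragraph together with the deterministic face count: one must check that (i) no $S_1$--$S_1$ edge appears after $t_1$, so the relevant edge set is frozen and of size $O(t^{1/8})$, and (ii) the ``two faces per edge'' invariant survives every subdivision, so the bound on bad faces holds uniformly in time rather than only at $t_1$. Once these are in place the rest is a routine first-moment computation, and the gap between the $O(t^{1/8}\log t)$ mean and the target $t^{1/6}$ leaves ample room for Markov's inequality. One minor point is whether the three root vertices are counted in $S_1$; including them changes the edge count only by an additive constant and does not affect the argument.
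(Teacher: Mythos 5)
Your proposal is correct and takes essentially the same route as the paper: reduce membership in $S_3'$ to the event that the inserted vertex picks a face with at least two $S_1$-vertices, bound the number of such faces at every time by $O(t_1)$ deterministically, and conclude with a first-moment computation and Markov's inequality. The only (cosmetic) difference is the deterministic count: you double count via the frozen $S_1$--$S_1$ edges, each bordering at most two faces (giving $6t_1+6$), whereas the paper argues that each of the original $2t_1+1$ faces spawns at most three faces with two $S_1$-vertices (giving $6t_1+3$); both yield the same $O(t^{1/8}\log t)$ bound on $\Mean{|S_3'|}$.
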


\begin{proof} 
First, observe that when vertex $s$ is inserted it becomes adjacent to more than one vertex of $S_1$ 
if the face chosen by $s$ has at least two vertices in $S_1$. We call the latter property $\mathcal{A}$
and we write $s \in \mathcal{A}$ when $s$ satisfies it. 
At time $t_1$ there exist $2t_1+1$ faces total, which consist of faces whose three vertices 
are all from $S_1$. At time $s \geq t_2$ there can be at most $6t_1+3$ faces with at least two vertices
in $S_1$ since each of the original $2t_1+1$ faces can give rise to at most 3 new faces with 
at least two vertices in $s_1$.
Consider a vertex $s \in S_3$, i.e., $s \geq t_2$. By the above argument, 
$ \Prob{|N(s) \cap S_1| \geq 2 } \leq \frac{6t_1+3}{2t+1}$. Writing $|S_3'|$ as a sum
of indicator variables, i.e., $|S_3'|= \sum_{s=t_2}^t I(s \in \mathcal{A})$
and taking the expectation we obtain

\begin{align*} 
\Mean{|S_3'|} &\leq \sum_{s=t_2}^t \frac{6t_1+3}{2t+1} \leq (6t_1+3) \int_{t_2}^t \! (2x+1)^{-1} \, \mathrm{d}x  \\ 
         &\leq (3t^{\tfrac{1}{8}}+\tfrac{3}{2})   \ln{\frac{2t+1}{2t_2+1}}= o(t^{1/7})
\end{align*}

\noindent By Markov's inequality:

$$ \Prob{ |S_3'| \geq t^{1/6} } \leq \frac{ \Mean{|S_3'|} }{t^{1/6}}  = o(1).$$ 

\noindent Therefore, we conclude that $|S_3'| \leq t^{1/6}$ \whp.
\end{proof}

\begin{lemma}
Let $F \subseteq G$ be the star forest consisting of edges between $S_1$ and $S_3 - S_3'$. 
Let $\Delta_1 \geq \Delta_2 \geq \ldots \geq \Delta_k$ denote the $k$ highest degrees of $G$. 
Then $\lambda_i(F) = (1-o(1))\sqrt{\Delta_i}$ \whp. 
\label{lem:lemma9} 
\end{lemma}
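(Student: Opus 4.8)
The plan is to exploit the fact that $F$ is a disjoint union of stars, whose spectrum is explicit. First I would observe that, by the definition of $S_3'$, every vertex of $S_3 - S_3'$ is adjacent to at most one vertex of $S_1$, so in $F$ each such vertex is either isolated or a leaf. Hence $F$ is a star forest whose stars are centred at the vertices $v \in S_1$, the star at $v$ having $m_v := |N(v) \cap (S_3 - S_3')| = d_F(v)$ leaves. Since the nonzero eigenvalues of $K_{1,m}$ are $\pm\sqrt m$ and the spectrum of a disjoint union is the union of the spectra, the positive eigenvalues of $F$ are exactly the values $\sqrt{m_v}$, $v \in S_1$, sorted in decreasing order; thus $\lambda_i(F) = \sqrt{m_{(i)}}$, where $m_{(i)}$ is the $i$-th largest value among the $m_v$. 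It therefore suffices to control the $k$ largest $m_v$.

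Next I would show that the $k$ largest stars are centred at the $k$ highest-degree vertices $v_1, \dots, v_k$ of $G$ and that $m_{v_i} = (1-o(1))\Delta_i$. By the degree analysis of Section~\ref{sec:degree} (Lemma~\ref{lem:lemma5}) these vertices are added before $t^{1/8}$ (indeed much earlier), hence lie in $S_1$. For such a vertex, $m_{v_i}$ differs from $\Delta_i = d_t(v_i)$ only by the neighbours of $v_i$ that are not in $S_3 - S_3'$, namely those in $S_1 \cup S_2$ together with those in $S_3'$. The former are exactly the neighbours present at time $t_2$, so they number $d_{t_2}(v_i)$, which by Lemma~\ref{lem:lemma7} applied with $s = t_2 = t^{9/16}$ and a small fixed $\epsilon$ is at most $t_2^{1/2+\epsilon} = t^{(9/16)(1/2+\epsilon)} = t^{1/2-\delta}$ for some $\delta>0$; the latter number is at most $|S_3'| \le t^{1/6}$ by Lemma~\ref{lem:lemma8}. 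Hence, whp, $0 \le \Delta_i - m_{v_i} \le t^{1/2-\delta} + t^{1/6} = O(t^{1/2-\delta})$.

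Finally I would combine these estimates. Applying Theorem~\ref{thrm:thrm1} with a slowly growing function (e.g. $f(t) = \log t$) and parameter $k+1$ gives, whp, $\Delta_i \ge t^{1/2}/\log t$ and the separations $\Delta_{i-1} - \Delta_i \ge t^{1/2}/\log t$ for all $i \le k+1$. Since the lost degree $O(t^{1/2-\delta})$ is negligible both against $\Delta_i$ and against these gaps, we obtain $m_{v_i} = (1-o(1))\Delta_i$; moreover every other centre $v \in S_1$ satisfies $m_v \le d_t(v) \le \Delta_{k+1} < m_{v_k}$, so the ordering of the $m_v$ matches that of the $\Delta_i$, giving $m_{(i)} = m_{v_i}$ and $\lambda_i(F) = \sqrt{m_{v_i}} = (1-o(1))\sqrt{\Delta_i}$. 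The main obstacle is the degree accounting of the second step: one must verify that a top vertex loses only a $t^{-\delta}$ fraction of its degree when its edges into $S_1 \cup S_2 \cup S_3'$ are discarded, and this is precisely where the choice of the thresholds $t_1 = t^{1/8}$ and $t_2 = t^{9/16}$ is forced, so that $t_2^{1/2+\epsilon}$ remains below $t^{1/2}$ by a fixed power.
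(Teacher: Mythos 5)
Your proof is correct and follows essentially the same route as the paper: account for the edges a top-degree vertex loses, bounding those into $S_1 \cup S_2$ by its degree at time $t_2$ via Lemma~\ref{lem:lemma7} and those into $S_3'$ by $|S_3'| \leq t^{1/6}$ via Lemma~\ref{lem:lemma8}, then use Theorem~\ref{thrm:thrm1} to see the loss $O(t^{1/2-\delta})$ is negligible against $\Delta_i \geq \sqrt{t}/\log t$. Your explicit verification of the star-forest spectrum and of the matching between the $i$-th largest star and the $i$-th highest-degree vertex, using the gaps $\Delta_{i-1} - \Delta_i \geq \sqrt{t}/f(t)$, fills in details the paper leaves implicit, and does so correctly.
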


\begin{proof}

It suffices to show that $\Delta_i(F) = (1-o(1))\Delta_i(G)$ for $i=1,\ldots,k$. 
Note that since the $k$ highest vertices are inserted before $t_1$ \whp, the edges they lose
are the edges between $S_1$ and the ones incident to $S_3'$ and $S_2$ 
and we know how to bound the cardinalities of all these sets. 
Specifically by Lemma~\ref{lem:lemma8} $|S_3'| \leq t^{1/6}$ \whp and by 
Theorem~\ref{thrm:thrm1} the maximum degree in $G_{t_1},G_{t_2}$ is less 
than $t_1^{1/2+\epsilon_1} = t^{1/8}$, $t_2^{1/2+\epsilon_2} = t^{5/16}$ for $\epsilon_1=1/16,\epsilon_2=1/32$
respectively \whp. Also by Theorem~\ref{thrm:thrm1} $\Delta_i(G) \geq \frac{\sqrt{t}}{\log{t}}$.
Hence, we obtain

$$ \Delta_i(F) \geq \Delta_i(G) - t^{1/8} - t^{5/16} - t^{1/6} = (1-o(1))\Delta_i(G).$$

\end{proof}

\noindent To complete the proof of Theorem~\ref{thrm:thrm2} it suffices to prove that $\lambda_1(H)$ is $o(\lambda_k(F))$
where $H=G-F$. We prove this in the following lemma. The proof is based on bounding maximum
degree of appropriately defined subgraphs using Lemma~\ref{lem:lemma7} and then using standard inequalities
from Spectral Graph Theory \cite{chungbook}. 

\begin{lemma} 
$\lambda_1(H) = o(t^{1/4})$ \whp.
\label{lem:lemma10}
\end{lemma}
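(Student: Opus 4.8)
The plan is to write the adjacency matrix $A_H$ as a sum of the adjacency matrices of edge-disjoint subgraphs determined by the partition $S_1,S_2,S_3$, and then to control $\lambda_1(H)$ via the subadditivity of the largest eigenvalue: for symmetric matrices $\lambda_1(X+Y)\le \lambda_1(X)+\lambda_1(Y)$ (Weyl's inequality), which follows from the variational characterization quoted above. Concretely, the edges of $H$ split into the three internal graphs $H_1,H_2,H_3$ (edges inside $S_1$, inside $S_2$, inside $S_3$), the two consecutive bipartite graphs $H_{12}$ and $H_{23}$ (edges between $S_1,S_2$ and between $S_2,S_3$), and the bipartite graph $H_{13'}$ of edges between $S_1$ and $S_3'$. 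Since $F$ removed exactly the $S_1$--$(S_3\setminus S_3')$ edges, these six subgraphs partition $E(H)$, so $\lambda_1(H)\le \lambda_1(H_1)+\lambda_1(H_2)+\lambda_1(H_3)+\lambda_1(H_{12})+\lambda_1(H_{23})+\lambda_1(H_{13'})$, and it suffices to show each term is $o(t^{1/4})$ \whp.

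For the internal graphs I would use $\lambda_1\le \Delta$, the maximum degree of the subgraph. Every $S_2$--$S_2$ edge is already present at time $t_2$ and every $S_3$--$S_3$ edge at time $t$, so by Lemma~\ref{lem:lemma7} (applied at $s=t_2$ and $s=t$ for a small fixed $\epsilon$, and using $r\ge t_1$ on $S_2$ and $r\ge t_2$ on $S_3$) these maximum degrees are at most $t_2^{1/2+\epsilon}t_1^{-1/2}=t^{7/32+o(1)}$ and $t^{1/2+\epsilon}t_2^{-1/2}=t^{7/32+o(1)}$ respectively, while $\Delta(H_1)\le |S_1|=O(t^{1/8})$; all three are $o(t^{1/4})$.

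The crux is the bipartite pieces, where the naive bound $\lambda_1\le\sqrt{\Delta_A\Delta_B}$ taking both $\Delta_A,\Delta_B$ as maximum degrees would only yield $t^{1/4}$ and fail. The key structural observation is that when any vertex is inserted it attaches to exactly the three vertices of the chosen face, and every edge it later acquires goes to a strictly later vertex; hence every vertex has at most $3$ neighbours among the vertices inserted before it. Consequently, in each bipartite graph between an earlier block and a later block, the later side has maximum degree at most $3$. Using $\lambda_1(\text{bipartite})\le\sqrt{\Delta_{\text{early}}\,\Delta_{\text{late}}}$ (the Schur test) with $\Delta_{\text{late}}\le 3$, I would bound $\lambda_1(H_{12})\le\sqrt{3\,t_2^{1/2+\epsilon}}=t^{9/64+o(1)}$, $\lambda_1(H_{23})\le\sqrt{3\,t^{1/2+\epsilon}t_1^{-1/2}}=t^{7/32+o(1)}$ (again via $r\ge t_1$), and $\lambda_1(H_{13'})\le\sqrt{3\,|S_3'|}\le\sqrt{3\,t^{1/6}}$ by Lemma~\ref{lem:lemma8}. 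Each exponent is strictly below $1/4$ once $\epsilon$ is fixed small (the binding one being $7/32$), so summing the six bounds gives $\lambda_1(H)=o(t^{1/4})$ \whp. The hard part is precisely recognizing and exploiting this asymmetry, namely the bounded backward degree, in the bipartite blocks; the remaining degree estimates are routine invocations of Lemmas~\ref{lem:lemma7} and~\ref{lem:lemma8}.
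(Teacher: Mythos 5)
Your proof is correct, and at the decisive step it genuinely diverges from --- and in fact repairs --- the paper's own argument. The paper uses the same six-piece decomposition $H_1,H_2,H_3,H_{12},H_{23},H_{13}$ and the same eigenvalue subadditivity, but it bounds \emph{every} piece purely by its maximum degree via Gershgorin. For the internal pieces and for the $S_1$--$S_3'$ piece this works, and your estimates essentially match the paper's ($\lambda_1(H_2)\leq t_2^{1/2+\epsilon}t_1^{-1/2}$, $\lambda_1(H_3)\leq t^{1/2+\epsilon}t_2^{-1/2}$, $\lambda_1(H_{13})\leq t^{1/6}$, with $\epsilon=1/64$ there). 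For the two consecutive bipartite pieces, however, the paper writes $\lambda_1(H_{12})\leq \Delta_1(H_{12})\leq t_2^{1/2+\epsilon}=t^{297/1024}$ and $\lambda_1(H_{23})\leq \Delta_1(H_{23})\leq t^{1/2+\epsilon}t_1^{-1/2}=t^{29/64}$, and both exponents exceed $1/4$; moreover this is not mere slack in the analysis, since an early vertex really can have order $t^{9/32}$ neighbours in $S_2$ and order $t^{7/16}$ neighbours in $S_3$, so the maximum-degree route cannot give $o(t^{1/4})$ for these two pieces and the paper's concluding line does not follow from its displayed inequalities. Your observation that every vertex has at most three neighbours among earlier vertices, combined with the bipartite bound $\lambda_1\leq\sqrt{\Delta_A\Delta_B}$, yields $\lambda_1(H_{12})\leq\sqrt{3\,t_2^{1/2+\epsilon}}=t^{9/64+O(\epsilon)}$ and $\lambda_1(H_{23})\leq\sqrt{3\,t^{1/2+\epsilon}t_1^{-1/2}}=t^{7/32+O(\epsilon)}$, safely below $t^{1/4}$; this asymmetric degree-product device is exactly the one used in the Flaxman--Frieze--Fenner analysis \cite{flaxman} that the paper cites as its template (there each incoming vertex sends $m$ backward edges), and it is what the Gershgorin shortcut loses. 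Two small confirmations: your edge-partition claim is right, since $F$ consists precisely of the $S_1$--$(S_3\setminus S_3')$ edges, so your six subgraphs do partition $E(H)$; and since Lemma~\ref{lem:lemma10} must feed into $\lambda_1(H)=o(\lambda_k(F))$ with $\lambda_k(F)\geq (1-o(1))\,t^{1/4}$ up to slowly growing factors, a bound of the form $t^{1/4-c}$ with constant $c>0$, which your worst exponent $7/32$ provides, is exactly what is required.
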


\begin{proof} 
From Gershgorin's theorem \cite{strang} the maximum eigenvalue of any graph is bounded by the maximum degree. 
We bound the eigenvalues of $H$ by bounding the maximum eigenvalues of six different induced
subgraphs. Specifically, let $H_i=H[S_i]$, $H_{ij}=H(S_i,S_j)$ where $H[S]$ is the subgraph
induced by the vertex set $S$ and $H(S,T)$ is the subgraph containing only edges with one vertex
is $S$ and other in $T$. We use Lemma~\ref{lem:lemma9} to bound 
$\lambda_1(H(S_1,S_3))$ and Lemma~\ref{lem:lemma8} for the other eigenvalues. 
We set $\epsilon=1/64$.

$$ \lambda_1(H_1) \leq \Delta_1(H_1)      \leq t_1^{1/2+\epsilon} = t^{33/512}. $$
$$ \lambda_1(H_2) \leq \Delta_1(H_2)    \leq t_2^{1/2+\epsilon}t_1^{-1/2} = t^{233/1024}. $$
$$ \lambda_1(H_3) \leq \Delta_1(H_3)    \leq t_3^{1/2+\epsilon}t_2^{-1/2} = t^{15/64}. $$
$$ \lambda_1(H_{12}) \leq \Delta_1(H_{12}) \leq t_2^{1/2+\epsilon} = t^{297/1024}. $$
$$ \lambda_1(H_{23}) \leq \Delta_1(H_{23}) \leq t_3^{1/2+\epsilon} t_1^{-1/2} = t^{29/64}. $$
$$ \lambda_1(H_{13}) \leq \Delta_1(H_{13}) \leq t^{1/6}. $$

Therefore \whp we obtain

$$ \lambda_1(H) \leq \sum_{i=1}^3 \lambda_1(H_i) + \sum_{i<j} \lambda_1(H_{i,j}) = o(t^{1/4}).$$
\end{proof}

\section{Diameter}
\label{sec:diameter} 

As shown in Appendix B of \cite{zhou} using physicist's methodology the diameter of a RAN is asymptotically upper bounded
by a logarithmic factor $\log{(t)}$. Similarly, Zhang et al. \cite{zhang} show that the diameter scales logarithmically. 
In the following we give a simple proof that the diameter $d(G_t)$  of the graph $G_t$ created  $t$ steps is $O(\log{t})$ \whp
and in Theorem~\ref{thrm:ternary} we give a refined upper bound for the diameter.
 
\begin{figure}[h]
\centering
\includegraphics[width=0.3\textwidth]{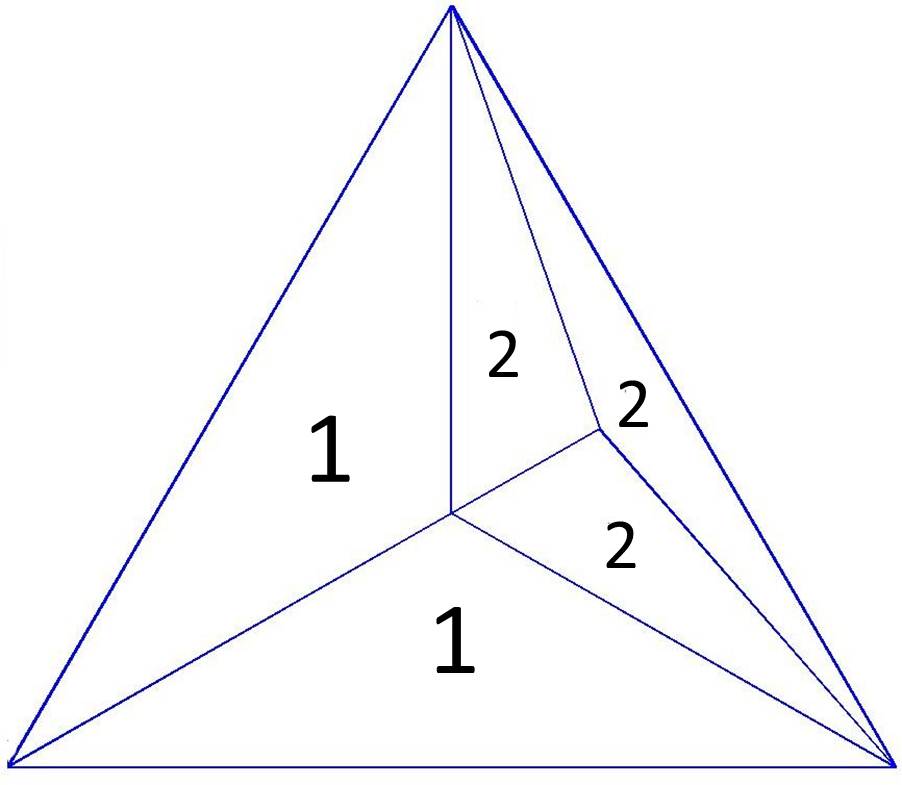}
\caption{An instance of the process for $t=2$. Each face is labelled with its depth.}
\label{fig:depth}
\end{figure}

We begin with a necessary definition for the proof of Claim~\ref{claim:diameter}. 
We define the {\it depth of a face} recursively. Initially we have one face $\alpha$ 
its depth is $depth(\alpha)=1$. For each new face $\beta$ created 
by picking a face $\gamma$, we have $depth(\beta)=depth(\gamma)+1$. 
An example is shown in Figure~\ref{fig:depth}, where each face is labelled with its
corresponding depth. 

\begin{claim} 
The diameter $d(G_t)$  of the graph $G_t$ created  $t$ steps is $O(\log{t})$ \whp. 
\label{claim:diameter} 
\end{claim}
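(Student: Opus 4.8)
The plan is to bound the diameter by a constant multiple of the largest face depth, and then to show by a first-moment computation that no face ever reaches depth exceeding $C\log t$ \whp for a suitable constant $C$. For the $O(\log t)$ claim any fixed $C$ will do, so I will not try to optimise it.

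First I would record the geometric fact that every vertex lies close to the initial triangle $T=\{1,2,3\}$, where closeness is controlled by depth. For a vertex $v$ inserted into a face $\gamma$, write $\mathrm{vd}(v)=\mathrm{depth}(\gamma)$. I claim the graph distance $d(v,T)$ is at most $\mathrm{vd}(v)$. The key observation is that if $\gamma$ has depth $d\ge 2$, then $\gamma$ was produced when a vertex $w$ (the apex) was inserted into the depth-$(d-1)$ parent face of $\gamma$; since $w$ is a corner of $\gamma$, it is a neighbour of $v$, and $\mathrm{vd}(w)=d-1<\mathrm{vd}(v)$. Induction on depth gives $d(v,T)\le 1+d(w,T)\le 1+(d-1)=d$, the base case $d=1$ being the vertex first inserted into the root face, which is adjacent to all of $T$. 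Consequently, if $D$ denotes the largest depth of a face at time $t$, every vertex is within $D$ of $T$, and since $T$ has diameter $1$ we obtain $d(G_t)\le 2D+1$. It therefore suffices to prove $D=O(\log t)$ \whp.

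To control $D$, note that the maximum depth at time $t$ equals the largest $k$ for which a depth-$k$ face has ever been created, so $\Prob{D\ge K}\le \Mean{B_K}$, where $B_K$ is the number of depth-$K$ faces ever created. I would estimate $\Mean{B_K}$ by summing over lineages: a depth-$K$ face descends from one of the three depth-$2$ faces present after the first step through a strictly increasing sequence of subdivision times $1<s_3<\cdots<s_K\le t$, where at step $s_j$ the relevant depth-$(j-1)$ face is the one chosen among the $2s_j-1$ faces then present and contributes one of three children. Bounding the ``present'' indicators by $1$, this yields
\[
\Mean{B_K}\;\le\; 3^{K-1}\sum_{1\le s_3<\cdots<s_K\le t}\;\prod_{j=3}^{K}\frac{1}{2s_j-1}
\;\le\; \frac{3^{K-1}}{(K-2)!}\Big(\sum_{s=1}^{t}\frac{1}{2s-1}\Big)^{K-2}
\;\le\; 3\,\frac{\big(\tfrac{3}{2}\log t+3\big)^{K-2}}{(K-2)!}.
\]
Taking $K=\Ceil{C\log t}$ and applying Stirling, the right-hand side behaves, up to polynomial factors, like $\big(\tfrac{3e}{2C}\big)^{C\log t}$, which tends to $0$ as soon as $C>\tfrac{3e}{2}$ (any large constant suffices). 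Hence $D\le C\log t$ \whp, and $d(G_t)\le 2C\log t+1=O(\log t)$.

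The main obstacle is the first step: setting up the depth induction so that the telescoping of distances is valid. The subtlety is to identify, for each child face, a corner whose vertex-depth is exactly one less, namely the apex inserted when the parent face was subdivided; with that neighbour in hand the induction is immediate. The second step is a routine first-moment estimate once the lineage decomposition is written down, the only point needing care being that conditioning on each face being present is harmless because we only ever upper bound probabilities. I expect the crude constant here to be far from the sharp $\rho$ of Theorem~\ref{thrm:ternary}: bounding every distance by the full face depth is wasteful, since a vertex is adjacent to all three corners of its face and not only to the deep apex, so the sharp argument must instead track the distance to the nearest corner. Establishing that refinement is exactly what the later theorem requires.
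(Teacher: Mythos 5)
Your proposal is correct and takes essentially the same route as the paper's proof: bound the diameter by the maximum face depth $k^*$, then show $k^*=O(\log t)$ \whp by a first-moment estimate summing over the insertion times along a face's lineage. You additionally make explicit two points the paper glosses over --- the induction giving $d(G_t)\le 2D+1$ behind its bare assertion $d(G_t)=O(k^*)$, and the branching factor $3^{K-1}$ absent from its formula for $\Mean{F_t(k)}$ --- neither of which changes the $O(\log t)$ conclusion, only the implied constant.
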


\begin{proof} 

Note that if $k^*$ is the maximum depth of a face then  $d(G_t)=O(k^*)$.
Hence, we need to upper bound the depth of a given face after $t$ rounds. 
Let $F_t(k)$ be the number of faces of depth $k$ at time $t$, then: 

\begin{align*}
\Mean{F_t(k)} &= \sum_{1 \leq t_1<t_2<\ldots<t_k\leq t} \prod_{j=1}^k \frac{1}{2t_j+1} \leq \frac{1}{k!} (\sum_{j=1}^t \frac{1}{2j+1} )^k \leq \frac{1}{k!} (\frac{1}{2} \log{t})^k \leq   (\frac{e\log{t}}{2k})^{k+1}
\end{align*}

By the first moment method we obtain $k^*=O(\log{t})$ \whp and hence $d(G_t)=O(\log{t})$ \whp. 
\end{proof}

\begin{figure}[h]
\centering
\includegraphics[width=0.5\textwidth]{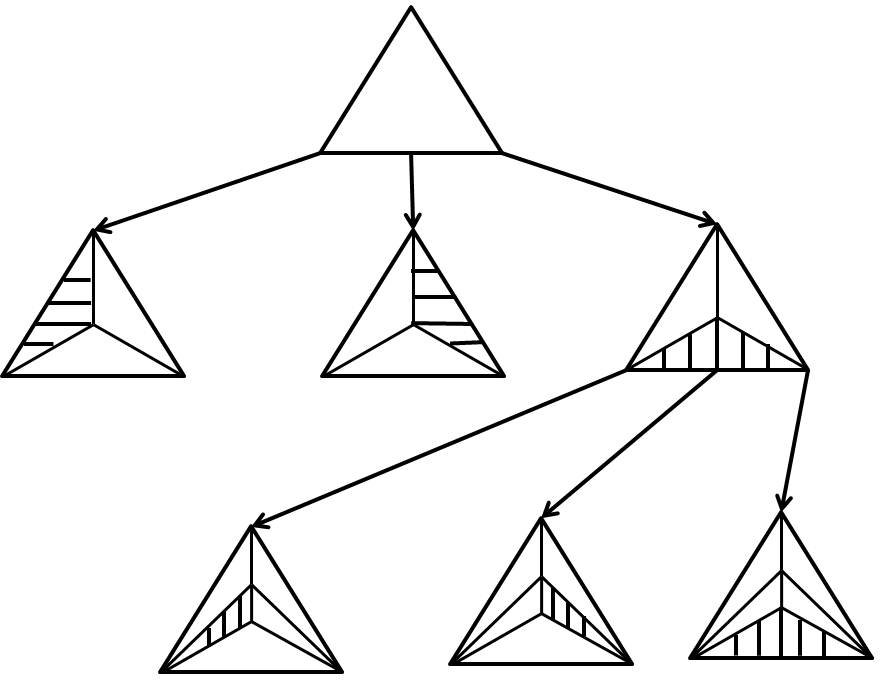}
\caption{RANs as random ternary trees.}
\label{fig:ternary}
\end{figure}

The depth of a face can be formalized via a bijection between random ternary trees and RANs.
Using this bijection we prove Theorem~\ref{thrm:ternary} which gives a refined upper bound on the 
asymptotic growth of the diameter.

\begin{proof} 
Consider the random process which starts with a single vertex tree and at every
step picks a random leaf and adds three children to it. Let $T$
be the resulting tree after $t$ steps. 
There exists a natural bijection between the RAN process and this process, see \cite{darase}
and also Figure~\ref{fig:ternary}.  
The depth of $T$ in probability 
is $\frac{\rho}{2} \log{t}$ where $\frac{1}{\rho}=\eta$ is the unique solution greater than 1 
of the equation $\eta - 1 - \log{\eta} = \log{3}$, see Broutin and Devroye \cite{devroye}, pp. 284-285.
Note that the diameter $d(G_t)$ is at most twice the height of the tree and hence the result follows. 
\end{proof} 	

The above observation, i.e., the bijection between RANs and random ternary trees
cannot be used to lower bound the diameter. A counterexample is shown in Figure~\ref{fig:counter}
where the height of the random ternary tree can be made arbitrarily large but the diameter is 2. 
Albenque and Marckert proved in \cite{albenque} that if $v,u$ are two i.i.d. uniformly random
internal vertices, i.e., $v,u \geq 4$, then the distance $d(u,v)$ tends
to $\frac{6}{11} \log{n}$ with probability 1 as the number of vertices $n$ of the RAN grows to infinity.
However, an exact expression of the asymptotic growth of the diameter to the best
of our knowledge remains an open problem.
Finally, it is worth mentioning that the diameter of the RAN grows faster
asymptotically than the diameter of the classic preferential attachment model \cite{albert} which \whp 
grows as $\frac{\log{t}}{\log{\log{t}}}$, see Bollob\'{a}s and Riordan \cite{bollobasriordan}.

\begin{figure}[h]
\centering
\includegraphics[width=0.5\textwidth]{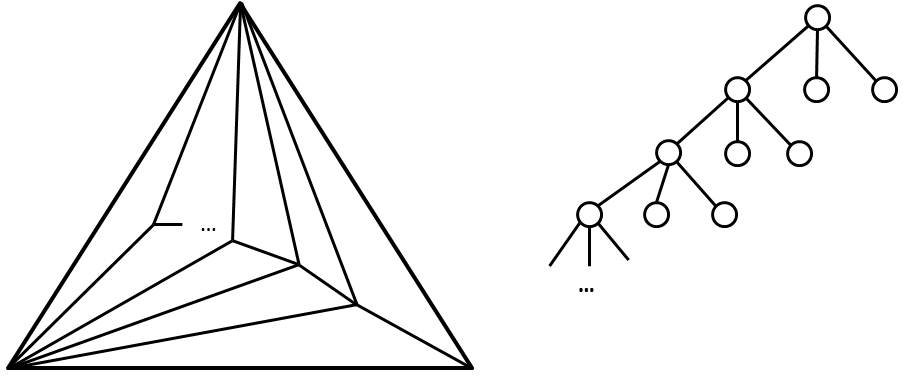}
\caption{The height of the random ternary tree cannot be used to lower bound the diameter. The height of the 
random ternary tree can be arbitrarily large but the diameter is 2.}
\label{fig:counter}
\end{figure}

\section{Waiting Times}
\label{sec:waiting} 

Consider the three faces created after the insertion of the first point.
Let's call the face which gets the first vertex $A$ and the other two faces $B,C$,
see also Figure~\ref{fig:fig1}(b). Also for simplicity of the notation
let the configuration of Figure~\ref{fig:fig1}(b) correspond to time $t=0$. 
Let $X$ equal the number of steps until a new vertex picks face  $B$ or $C$. 
Clearly, $X \in \{1,2,\ldots\}$. What is the expectation $\Mean{X}$? For any $t \geq 1$

\begin{align*}
\Prob{X > t} &= \prod_{j=1}^t \frac{3+2(j-1)}{5+2(j-1)} = \frac{3}{2t+3}.
\end{align*}

Using now the fact that $\Mean{X} = \sum_{t=1}^{+\infty} \Prob{X \geq t}= 1 + \sum_{t=1}^{+\infty} \Prob{X>t}$ 
and substituting we obtain that  $\Mean{X} =+\infty$.

\section{Conclusions}
\label{sec:concl}

In this work we studied several aspects of Random Apollonian Networks,
namely the highest degree vertices, the spectrum, the diameter and the ``waiting times''.
There are various research directions which we plan to address in future work.
Indicatively, we report the following: 
(a) In contrast to the preferential attachment model, RAN have small separators. 
Specifically, by the Lipton-Tarjan theorem \cite{lipton-tarjan} 
we know the existence of a good separator of size $O(\sqrt{n})$. 
This is in accordance with the empirical observation that real world networks have good 
separators \cite{blandford}. We plan to investigative the minimum size separator in future work. 
(b) Consider the natural process where an adversary with probability $\alpha$ 
at every step deletes a vertex from the network, e.g., uniformly at random
or in a maliscious way. Under which conditions do we obtain a connected RAN?
(c) What is the exact asymptotic expression for the diameter of $G_t$?

\section*{Acknowledgments}

We would like to thank Luc Devroye and Alexis Darrasse for pointing out references \cite{devroye}
and \cite{albenque,panholzer} respectively.

\section*{APPENDIX} 
\label{sec:appendix}

\subsection*{Proof of Claim~\ref{claim:lemma2}}

\begin{proof}

Let $\tau=(t_0 \equiv \tau_0,\underbrace{\tau_1,\ldots,\tau_r}_{\text{insertion times}},\tau_{r+1} \equiv t)$ be a vector denoting that $Y_t$ 
increases by 1 at $\tau_i$ for $i=1,\ldots,r$. 
We upper bound the probability $p_{\tau}$ of this event in the following.Note that 
we consider the case where the vertices have same degree as the number of faces around them,
Figure~\ref{fig:fig2}(b). The other case is analyzed in exactly the same way, modulo a negligible
error term.

\begin{align*}
p_{\tau} &= \Bigg[ \prod_{k=1}^r \frac{d_0+k-1}{2\tau_k+1} \Bigg] \Bigg[ \prod_{k=0}^r \prod_{j=\tau_k+1}^{\tau_{k+1}-1} \Big(1-\frac{d_0+k}{2j+1}\Big) \Bigg] \\ 
         &= d_0 (d_0+1) \ldots (d_0+r-1)\Bigg[ \prod_{k=1}^r \frac{1}{2\tau_k+1} \Bigg]
         \exp{\Bigg( \sum_{k=0}^r \sum_{j=\tau_k+1}^{\tau_{k+1}-1} \log{\Big(1-\frac{d_0+k}{2j+1}\Big)}  \Bigg)} \\
         &= \frac{(d_0+r-1)!}{(d_0-1)!}\Bigg[ \prod_{k=1}^r \frac{1}{2\tau_k+1} \Bigg]
         \exp{\Bigg( \sum_{k=0}^r \sum_{j=\tau_k+1}^{\tau_{k+1}-1} \log{\Big(1-\frac{d_0+k}{2j+1}\Big)}  \Bigg)} \\
\end{align*}

\noindent Consider now the inner sum which we upper bound using an integral: 

\begin{align*} 
\sum_{j=\tau_k+1}^{\tau_{k+1}-1} \log{\Big(1-\frac{d_0+k}{2j+1}\Big)}  &\leq  \int_{\tau_k+1}^{\tau_{k+1}} \! \log{\Big(1-\frac{d_0+k}{2x+1}\Big)}  \, \mathrm{d}x  \\
&\leq -\big( \tau_{k+1}+\tfrac{1}{2} \big)  \log{(2\tau_{k+1}+1)} + \\
& \frac{2\tau_{k+1}+1-(d_0+k)}{2} \log{(2\tau_{k+1}+1-(d_0+k))} + \\
& \big( \tau_{k}+\tfrac{3}{2} \big)  \log{(2\tau_{k}+3)} - \frac{2\tau_{k}+3-(d_0+k)}{2} \log{(2\tau_{k}+3-(d_0+k))} 
\end{align*}

\noindent since 

$$ \int \log{\Big(1-\frac{d_0+k}{2x+1}\Big)} = -\big( x+\tfrac{1}{2} \big)  \log{(2x+1)} + \frac{2x+1-(d_0+k)}{2} \log{(2x+1-(d_0+k))} $$

\noindent Hence we obtain $\sum_{k=0}^r \sum_{j=\tau_k+1}^{\tau_{k+1}-1} \log{\Big(1-\frac{d_0+k}{2j+1}\Big)}  \leq A+\sum_{k=1}^r B_k$
where 

\begin{align*} 
A &= \big( \tau_{0}+\tfrac{3}{2} \big)  \log{(2\tau_{0}+3)} - \frac{2\tau_{0}+3-d_0}{2} \log{(2\tau_{0}+3-d_0)} \\
  & -\big( \tau_{r+1}+\tfrac{1}{2} \big)  \log{(2\tau_{r+1}+1)} + \frac{2\tau_{r+1}+1-(d_0+r)}{2} \log{(2\tau_{r+1}+1-(d_0+r))}
\end{align*}

\noindent and 

\begin{align*} 
B_k &= \big( \tau_{k}+\tfrac{3}{2} \big)  \log{(2\tau_{k}+3)} - \frac{2\tau_{k}+3-(d_0+k)}{2} \log{(2\tau_{k}+3-(d_0+k))} \\
    & -\big( \tau_{k}+\tfrac{1}{2} \big)  \log{(2\tau_{k}+1)} + \frac{2\tau_{k}+1-(d_0+k-1)}{2} \log{(2\tau_{k}+1-(d_0+k-1))}.
\end{align*} 

We first upper bound the quantities $B_k$ for $k=1,\ldots,r$. By rearranging terms and using the identity $\log{(1+x)} \leq x$ 
we obtain 

\begin{align*} 
B_k &= \big( \tau_{k}+\tfrac{1}{2} \big)  \log{\big(1+\frac{1}{\tau_{k}+\tfrac{1}{2}}\big)} + \log{(2\tau_k+3)} \\
    &  -\frac{1}{2} \log{\big(2\tau_k+3-(d_0+k)\big)} - \frac{2\tau_{k}+2-(d_0+k)}{2} \log{\Big( 1+\frac{1}{2\tau_{k}+2-(d_0+k)} \Big)}.\\
    &\leq \frac{1}{2} + \frac{1}{2} \log{\big(2\tau_k+3\big)} -\frac{1}{2} \log{\big( 1-\frac{d_0+k}{2\tau_{k}+3} \big)}
\end{align*} 

First we rearrange terms and then we bound the term $e^A$ by using the inequality $e^{-x-x^2/2} \geq 1-x$ which is valid for $0<x<1$:

\begin{align*} 
A   &= -\big( \tau_{0}+\tfrac{3}{2} \big)  \log{\big(1-\frac{d_0}{2\tau_{0}+3}\big)} + \big( \tau_{r+1}+\tfrac{1}{2} \big)  \log{\big(1-\frac{d_0+r}{2\tau_{r+1}+1}\big)} +\frac{d_0}{2} \log{\big( 2\tau_0 +3 -d_0\big)} \\
    &    - \frac{d_0+r}{2} \log{\Big( 2\tau_{r+1}+1-(d_0+r) \Big)}.\Rightarrow \\
e^A &= \big(1-\frac{d_0}{2\tau_{0}+3}\big)^{-(\tau_{0}+\tfrac{3}{2}) } \big(1-\frac{d_0+r}{2\tau_{r+1}+1}\big)^{\tau_{r+1}+\tfrac{1}{2}}( 2\tau_0 +3 -d_0)^{\frac{d_0}{2}}( 2\tau_{r+1}+1-(d_0+r))^{- \frac{d_0+r}{2}} \\
    &= \bigg( \frac{2\tau_0+3}{2\tau_{r+1}+1} \bigg)^{d_0/2} (2\tau_{r+1}+1)^{-r/2} \bigg( 1-\frac{d_0}{2\tau_0+3}\bigg)^{-(\tau_{0}+\tfrac{3}{2})+\tfrac{d_0}{2}} \bigg( 1-\frac{d_0+r}{2\tau_{r+1}+1} \bigg)^{\tau_{r+1}+\tfrac{1}{2} -\frac{d_0+r}{2}} \\
    &\leq \bigg( \frac{2t_0+3}{2t+1} \bigg)^{d_0/2} (2t+1)^{-r/2} \bigg( 1-\frac{d_0}{2\tau_0+3}\bigg)^{-(\tau_{0}+\tfrac{3}{2})+\tfrac{d_0}{2}} e^{ \Big( -\frac{d_0+r}{2t+1}- \big(-\frac{d_0+r}{2t+1}\big)^2/2 \Big) (t+1/2-\frac{d_0+r}{2})  } \\
    &= \bigg( \frac{2t_0+3}{2t+1} \bigg)^{d_0/2} (2t+1)^{-r/2} \bigg( 1-\frac{d_0}{2\tau_0+3}\bigg)^{-(\tau_{0}+\tfrac{3}{2})+\tfrac{d_0}{2}} e^{ -\frac{d_0+r}{2}+\frac{(d_0+r)^2}{8t+4}+\frac{(d_0+r)^3}{4(2t+1)^2} }
\end{align*} 

\noindent Now we upper bound the term $\exp{\big( A+\sum_{k=1}^r B_k \big)}$ using the above upper bounds:

\begin{align*} 
e^{A+\sum_{k=1}^r B_k} &\leq e^A e^{r/2} \prod_{i=1}^r \sqrt{\frac{2\tau_k+3}{1-\frac{d_0+k}{2\tau_k+3}}}\\
&\leq  \bigg( 1-\frac{d_0}{2\tau_0+3}\bigg)^{-(\tau_{0}+\tfrac{3}{2})+\tfrac{d_0}{2}}  e^{ -\frac{d_0}{2}+\frac{(d_0+r)^2}{8t+4}+\frac{(d_0+r)^3}{4(2t+1)^2} } \bigg( \frac{2t_0+3}{2t+1} \bigg)^{d_0/2} \times  \\
& (2t+1)^{-r/2} \prod_{i=1}^r \sqrt{\frac{2\tau_k+3}{1-\frac{d_0+k}{2\tau_k+3}}} 
\end{align*}

\noindent  Using the above upper bound we get that

\begin{align*}
p_{\tau} &\leq  C(r,d_0,t_0,t) \prod_{k=1}^r  \Big[ (2\tau_k+3-(d_0+k))^{-1/2} \big( 1+\frac{1}{\tau_k+1/2}\big)\Big]
\end{align*}

\noindent where 

$$ C(r,d_0,t_0,t)=\frac{(d_0+r-1)!}{(d_0-1)!}\bigg( 1-\frac{d_0}{2\tau_0+3}\bigg)^{-(\tau_{0}+\tfrac{3}{2})+\tfrac{d_0}{2}}  e^{ -\frac{d_0}{2}+\frac{(d_0+r)^2}{8t+4}+\frac{(d_0+r)^3}{4(2t+1)^2} } \bigg( \frac{2t_0+3}{2t+1} \bigg)^{d_0/2}(2t+1)^{-r/2}$$

We need to sum over all possible insertion times to bound the probability of interest $p^*$. We set $\tau_k' \leftarrow \tau_k -  \lceil \frac{d_0+k}{2} \rceil$ for $k=1,\ldots,r$. For $d=o(\sqrt{t})$ and $r=o(t^{2/3})$ we obtain: 

\begin{align*} 
p^* &\leq C(r,d_0,t_0,t) \sum_{t_0+1 \leq \tau_1 < .. <\tau_r \leq t}  \prod_{k=1}^r  \Big[ (2\tau_k+3-(d_0+k))^{-1/2} \big( 1+\frac{1}{\tau_k+1/2}\big)\Big] \\
  &\leq C(r,d_0,t_0,t) \sum_{t_0-\lceil \tfrac{d_0}{2} \rceil +1 \leq \tau_1' \leq .. \leq \tau_r' \leq t-\lceil \tfrac{d_0+r}{2} \rceil}  \prod_{k=1}^r  \Big[ (2\tau_k'+3)^{-1/2} \big( 1+\frac{1}{\tau_k'+\frac{d_0+k}{2} +1/2}\big)\Big] \\
  &\leq \frac{C(r,d_0,t_0,t)}{r!} \Bigg( \sum_{t_0-\lceil \tfrac{d_0}{2} \rceil}^{t-\lceil \tfrac{d_0+r}{2} \rceil} (2\tau_k'+3)^{-1/2}+\tfrac{1}{\sqrt{2}} (\tau_k'+3/2)^{-3/2} \Bigg)^r \\
  &\leq \frac{C(r,d_0,t_0,t)}{r!}  \Bigg( \int_{0}^{t-\frac{d+r}{2}} \! \Big[  (2x+3)^{-1/2} + \tfrac{1}{\sqrt{2}} (x+3/2)^{-3/2} \Big] \, \mathrm{d}x\Bigg)^r \\ 
  &\leq \frac{C(r,d_0,t_0,t)}{r!} \Big(\sqrt{2t+3-(d_0+r)}+2/3\Big)^{r}  \\ 
  &\leq \frac{C(r,d_0,t_0,t)}{r!} (2t)^{r/2} e^{-\tfrac{r}{2} \tfrac{d_0+r-3}{2t} } e^{\frac{2r}{3\sqrt{2t-(d_0+r)+3}}} \\
  &\leq {d_0+r-1 \choose d_0-1} \big(\frac{2t_0+3}{2t+1}\big)^{d_0/2} \Big[ \big( 1-\frac{d_0}{2t_0+3}\big)^{-(1-\frac{d_0}{2t_0+3})} \Big]^{t_0+3/2} \times \\ 
  & \big(\frac{2t}{2t+1}\big)^{r/2} \exp{\Big( -\frac{d_0}{2} + \frac{(d_0+r)^2}{8t+4}+ \frac{(d_0+r)^3}{4(2t+1)^2} - \frac{r(d_0+r-3)}{4t}+\frac{2r}{3\sqrt{2t+3-(d_0+r)}}  \Big)} \\ 
\end{align*}

\noindent By removing the $o(1)$ terms in the exponential and using the fact that $x^{-x}\leq e$ 
we obtain the following bound on the probability $p^*$.

$$  p^* \leq  {d_0+r-1 \choose d_0-1} \Big(\frac{2t_0+3}{2t+1}\Big)^{d_0/2} e^{\frac{3}{2}+t_0-\frac{d_0}{2}+\frac{2r}{3\sqrt{t}}} $$
\end{proof}

\end{document}